\newcommand{\OPT}{{\mathop{\rm OPT}}}
\newtheorem{theorem}{Theorem}[section]
\newtheorem{lemma}{Lemma}[section]
\newtheorem{prop}{Proposition}[section]
\newtheorem{corollary}{Corollary}[section]
\newtheorem{observation}{Observation}[section]
\newtheorem{property}{Property}[section]
\begin{document}

\begin{frontmatter}

\title{Algorithms and Computational Study on a Transportation System Integrating Public Transit and Ridesharing of Personal Vehicles\tnoteref{t1}}

\tnotetext[t1]{A preliminary version of the paper appeared in the Proceedings of ISAAC2021~\citep{Gu-ISAAC21}.}

\author[1]{Qian-Ping Gu}
\ead{qgu@sfu.ca}

\author[1]{Jiajian Leo Liang\corref{cor1}}
\ead{leo\_liang@sfu.ca}

\affiliation[1]{organization={School of Computing Science, Simon Fraser University},
            city={Burnaby},
            state={British Columbia},
            country={Canada}}

\cortext[cor1]{Corresponding author}

\begin{abstract}
The potential of integrating public transit with ridesharing includes shorter travel time for commuters and higher occupancy rate of personal vehicles and public transit ridership.
In this paper, we describe a centralized transit system that integrates public transit and ridesharing to reduce travel time for commuters.
In the system, a set of ridesharing providers (drivers) and a set of public transit riders are received. 
The optimization goal of the system is to assign riders to drivers by arranging public transit and ridesharing combined routes subject to shorter commuting time for as many riders as possible.
We give an exact algorithm, which is an ILP formulation based on a hypergraph representation of the problem.
By using the ILP and the hypergraph, we give approximation algorithms based on LP-rounding and hypergraph matching/weighted set packing, respectively.
As a case study, we conduct an extensive computational study based on real-world public transit dataset and ridesharing dataset in Chicago city.
To evaluate the effectiveness of the transit system and our algorithms, we generate data instances from the datasets.
The experimental results show that more than 60\% of riders are assigned to drivers on average, riders' commuting time is reduced by 23\% and vehicle occupancy rate is improved to almost 3.
Our proposed algorithms are efficient for practical scenarios.
\end{abstract}

\begin{keyword}
Multimodal transportation \sep ridesharing \sep approximation algorithms \sep computational study
\end{keyword}

\end{frontmatter}

\section{Introduction} \label{sec-intro}
As the population grows in urban areas, commuting between and within large cities is time-consuming and resource-demanding.
Due to growing passenger demand, the number of vehicles on the road for both public and private transportation has increased to handle the demand.
Current public transportation systems in many cities may not be able to handle the increasing demand due to their slow (or lack of) transit development.
This can cause greater inconvenience for transit users, such as longer waiting time, more transfers and/or imbalanced transit ridership.
As a result of the inconvenience, many people choose to use personal vehicles for work commute.
According to~\citep{CSS20, Sierpinski-ATST13, StatsCan-2016}, personal vehicles were the main transportation mode in the United States and Canada and in more than 200 European cities.
Almost 73\% of total work commute is by car as a driver in Canada in recent years~\citep{StatsCan-2016}.
In the US, the growth rate of population actually has become higher than that of the transit ridership since 2016 (9\% higher in 2019) according to~\citet{APTA22}.
The occupancy rate of personal vehicles in the U.S. was 1.6 persons per vehicle in 2011~\citep{Ghoseiri-USDT11,Santos-USDTFHA11} and decreased to 1.5 persons per vehicle in 2017~\citep{CSS20}, which can be a major cause for congestion and pollution.
This is the reason municipal governments encourage the use of public transit.

One of the major drawbacks of public transit is the inconvenience and inflexibility of first mile and last mile (\textbf{FM/LM}) transportation, compared to personal vehicles~\citep{Burstlein-TRAPP21,Chen-TRBM18,Wang-TS14}.
The FM transportation refers to the transit service from a passenger's home/origin to the nearest public transportation hub, and vice versa for LM transportation.
Mobility-on-demand (MoD) systems, such  as Uber, Lyft and DiDi, have become popular around the globe for their convenience.
With the increasing popularity in ridesharing/ridehailing service, there may be potential in integrating private and public transportation systems, as suggested by some studies~\citep[e.g.,][]{Alonso-Gonzalez-TRR18,Huang-TITS19,Kumar-TRCET21,Ma-TRELTR19,Narayan-TRCET20,Stiglic-COR18}.
However, the under utilization of MoD vehicles may have increased congestion and $\textup{CO}_2$ emissions due to the increased of low-occupancy ridesharing vehicles on the road~\citep{Diao-NS21, Henao-Trans19,Tirachini-IJST20}.
From the research report of~\citet{Feigon-TRB16}, it is recommended that public transit agencies should build on mobility innovations to allow public-private engagement in ridesharing because the use of shared modes increases the likelihood of using public transit.
A similar finding, reported in~\citet{Zhang-IJERPH18}, indicates that the use of ridesharing may be positively associated with public transit ridership.
As pointed out by~\citet{Ma-TRELTR19}, some basic form of collaboration between MoD services and public transit already exists for FM/LM transportation.
For example, \citet{Thao-TRIP21} mentioned in their study that a basic integration of ridesharing and public transport in rural Switzerland is already in place; and
there have been pilot projects (e.g., Taxito, Ebuxi/mybuxi, Kollibri, sowiduu) promoting the integration of public transit and MoD in Switzerland.
There is an increasing interest for collaboration between private companies and public sector entities~\citep{Raghunathan-SMT18}.

The spareness of transit networks usually is the main cause of the inconvenience in public transit.
Such transit networks have infrequent transit schedule and can cause customers to have multiple transfers.
In this paper, we investigate the potential effectiveness of integrating public transit with ridesharing to reduce travel time for commuters and increase occupancy rate in such sparse transit networks (with the focus on work commute).
For example, people who drive their vehicles to work can pick-up \textit{riders}, who use public transit regularly, at designated locations and drop-off them at some transit stops, and then these riders can take public transit to their destinations.
In this way, riders are presented with a cheaper alternative than ridesharing for the entire trip, and it is more convenient than using public transit only.
The transit system also gets a higher ridership, which aligns with the recommendation of~\citet{Feigon-TRB16} and \citet{Zhang-IJERPH18} for a more sustainable transportation system.

Our research focuses on a centralized transit system that is capable of matching drivers and riders satisfying their trips' requirements while achieving some optimization goal. The requirements of a trip may include an origin and a destination, time constraints, capacity of a vehicle, and so on.
For a rider, a \emph{ridesharing route} contains a road segment where the rider is served by a driver and a road segment where the rider uses public transit; and a \emph{public transit route} means that the rider uses public transit only.
To improve commuting time for riders and increase transit ridership, the transit system's optimization goal is to maximize the number of riders, each of whom is assigned a ridesharing route that is quicker than the fastest public transit route for the rider. We call this the multimodal transportation with ridesharing (\textbf{MTR}) problem (formal definition is given in Section~\ref{sec-preliminary}).

\subsection{Related work}\label{subsec-related-work}
There is a rich literature on standalone ridesharing and carpooling, from theoretical to computational studies~\citep[some representative works][]{Agatz-TRBM11,Alonso-Mora-PNAS17,Gu-TCS21}, and recent literature reviews can be found in~\citep{Mourad-TRBM19,Tafreshian-SS20,Wang-TRBM19}.
A few studies on the integration of public transit with dynamic ridesharing have also been reported. 
\citet{Aissat-ICEIS15} proposed an approach which, given a public transit route for a rider, substitutes each part of the route with ridesharing if ridesharing is better than the original part. 
Their algorithm finds the best route for each rider in first-come first-serve (FCFS) basis, where an optimization goal of the system is not considered, and the algorithm is computational intensive.
\citet{Huang-TITS19} presented a more robust approach, compared to~\citet{Aissat-ICEIS15}, by combining two networks $N, N'$ (representing the public transit and ridesharing networks, respectively) into one single routable graph $G$.
The graph $G$ uses the \emph{time-expanded model} to maintain the information about all public-vehicle schedule, riders' and drivers' origins, destinations and time constraints.
For any rider travel query, a public transit and ridesharing combined route (ridesharing route as defined above) is found on $G$, if available, by a shortest path algorithm.
Their approach is also FCFS basis.
\citet{Masoud-TRR17} used a similar idea of time-expanded network.
Their algorithm and experiment consider only a limited number of transfer points, which may constrain their algorithm in large-scale transit systems.
Due to the nature of FCFS basis in the above mentioned papers, no exact and approximation algorithms are considered to achieve an overall optimization goal.

\citet{Kumar-TRCET21} studied the FM/LM problem for transit in which individuals have no or limited transit service due to limited transit coverage and connectivity.
The authors use a schedule-based transit network graph to determine a set of feasible matches (a feasible match has a driver and a rider satisfying all constraints). 
Then, they formulate an integer linear programming (ILP) to find an optimal matching from the computed feasible matches.
Their approach only consider at most one rider per driver.
\citet{Molenbruch-EJOR21} studied a similar FM/LM problem where public transit is not available to riders, so the riders have to rely on demand-responsive services to connect to major transit stations. The difference is that~\citet{Molenbruch-EJOR21} focuses more on the dial-a-ride problem (DARP)~\citep{Cordeau-TRBM03}.
The authors use a variant of the metaheuristic Large Neighborhood Search to compute a list of candidate riders that can be served by each DAR vehicle.
Then, candidates are verified based on their (users' and DAR vehicles') time constraints.
Objective is to design minimum-distance DAR routes, satisfying all user requests.

\citet{Luo-JTL21} proposed a different multimodal transportation system for the FM/LM problem.
Their transportation system integrates micromobility services (including bikes and electric scooters sharing) with MoD.
In the system, travellers use micromobility services for the FM/LM connections to hubs for ridesharing supported by MoD.
The optimization goals include finding optimum placements of hubs and micromobility vehicles for given demands and supplies (re-positioning the micromobility vehicles as well).
\citet{Salazar-TITS20} introduced autonomous vehicle into the integration of public transit and MoD while considering the energy consumption MoD autonomous vehicles.
A relevant optimization goal considered in~\citet{Salazar-TITS20} is to minimize riders' travel time  together with the operational costs of the autonomous fleet and public transportation.

Several transportation systems integrating public transit and ridesharing have been proposed to address the FM/LM problem.
\citet{Ma-TRELTR19} presented an integrated transportation system to provide ridesharing services using a fleet of dedicated vehicles. The system is FCFS basis.
When a rider request enters the system, a fastest travel option is computed approximately and provided to the rider.
The system also includes the relocation of ridesharing vehicles if they are idle.
\citet{Narayan-TRCET20} gave a similar system (without vehicle re-location).
Their conditions for best routes are different from~\citet{Ma-TRELTR19}, and the routes are computed differently.
The models in~\citet{Ma-TRELTR19} and \citet{Narayan-TRCET20} are different from our model in which personal drivers are the main focus.

\citet{Ma-EEEIC17} and \citet{Stiglic-COR18} proposed models to integrate public transit and ridesharing as graph matching problems to achieve certain optimization goals.
In the models of~\citet{Ma-EEEIC17} and \citet{Stiglic-COR18}, a set of drivers, a set of riders and a public transit network are given.
Their models assign riders to drivers to use ridesharing for replacing FM/LM transit (exclusive ridesharing can be supported, as described in~\citet{Stiglic-COR18}).
A group of riders can be assigned to a driver if all constraints of the riders in the group and the driver are satisfied.
Each of rider groups and drivers is represented as a node in a shareability graph (RV graph~\citet{Santi-PNAS14} and RTV graph~\citet{Alonso-Mora-PNAS17}); and there is an edge between a rider group node and a driver node if the group of riders can be assigned to the driver. The rider and driver assignment problem is modelled as graph matching problem in the graph (then formulated as an ILP problem and solved by an ILP solver). 
The optimization goal in~\citet{Ma-EEEIC17} is to minimize the cost related to waiting time and travel time. Two optimization goals are considered in~\citet{Stiglic-COR18}: one is to maximize the number of riders assigned to drivers, and the other is to minimize the total distance increase for all drivers. These models are closely related to ours, but there are differences and limitations.
The goal in~\citet{Ma-EEEIC17} is different from ours and does not guarantee ridesharing routes better than the public transit routes.
One of the goals in~\citet{Stiglic-COR18} aligns with ours, but there are restrictions: a rider group can have at most two riders, each rider must use the transit stop closest to the rider's destination, and more importantly, the ridesharing routes are not guaranteed to be better than public transit routes. 

It is worth to mention that the ILP formulation for the MTR problem (described in Section~\ref{sec-exact-IP}) is a special case of the Separable Assignment Problem (SAP), which is a generalization of the Generalized Assignment Problem (GAP).
Given a $\beta$-approximation algorithm for the single-bin subproblem in SAP,~\citet{Fleischer-SODA06} presented two approximation algorithms for SAP: an LP-rounding based $((1-\frac{1}{e})\beta)$-approximation algorithm and a local-search $(\frac{\beta}{\beta+1}-\epsilon)$-approximation algorithm, $\epsilon>0$.
If interested, a problem related to SAP is the Multiple Knapsack Problem with Assignment Restrictions~\citep[e.g.,][]{Dawande-JCO00}.

\subsection{Contribution}
In this paper, we propose a system integrating public transit and ridesharing to provide ridesharing routes quicker than public transit routes for as many riders as possible (MTR problem). 
We use a similar model as in~\citet{Ma-EEEIC17} and \citet{Stiglic-COR18} to solve the MTR problem; and we extend the work in~\citet{Stiglic-COR18} to eliminate the limitations described above. That is, a ridesharing match allows more than two riders, riders can be picked-up/dropped-off at any feasible transit stop, and ridesharing routes assigned to riders are quicker than the fastest public transit routes.
Also, no approximation algorithms are presented in~\citet{Ma-EEEIC17} and \citet{Stiglic-COR18}, and
we give an exact algorithm and approximation algorithms for the optimization problem to ensure solution quality.
Our exact algorithm approach is also similar to the approach proposed in~\citet{Alonso-Mora-PNAS17} and \citet{Santi-PNAS14} for ridesharing.
No approximation algorithm is presented in~\citet{Alonso-Mora-PNAS17}, whereas
\citet{Santi-PNAS14} obtained a $\frac{1}{K+1}$-approximation algorithm assuming the maximum capacity $K$ among all vehicles (drivers) is at least two.
We prove that our approximation algorithms for the MTR problem have constant approximation ratios, regardless of the vehicle capacity.
Our discrete algorithms allow to control the trade-off between quality and computational time.
We evaluate the proposed exact algorithm and approximation algorithms by conducting an extensive computational study based on real-life data.
Our main contributions are summarized as follows:
\begin{enumerate}
\setlength\itemsep{0em}
\item We give an exact algorithm approach (an ILP formulation based on a hypergraph representation) for integrating public transit and ridesharing.

\item We prove the MTR problem is NP-hard. We give an LP-rounding based $(1-\frac{1}{e})$-approximation algorithm and a discrete $\frac{1}{2}$-approximation algorithm for the problem. We show that previous approximation algorithms in~\citet{Berman-SWAT00} and \citet{Chandra-JoA01} for the $k$-set packing problem are $\frac{1}{2}$-approximation algorithms for the MTR problem. Our $\frac{1}{2}$-approximation algorithm is more time and space efficient than the $k$-set packing algorithms.

\item Based on real-life data in Chicago City, we create data instances for empirical study to evaluate the potential of an integrated transit system and the effectiveness of the exact algorithm and approximation algorithms. We conduct extensive experiments on the data instances; and the results show that our integrated transit system is able to assign more than 60\% of riders to drivers, reduce riders' travel time by 23\% and increase the occupancy rate of personal vehicles to about three.
\end{enumerate}

The rest of the paper is organized as follows.
In Section~\ref{sec-preliminary}, we give the preliminaries of the paper, describe a centralized system that integrates public transit and ridesharing, and define the MTR optimization problem.
In Section~\ref{sec-exact}, we describe our exact algorithm approach. We then propose approximation algorithms in Section~\ref{sec-approximate}.
We discuss our numerical experiments and results in Section~\ref{sec-experiment}.
Finally, Section~\ref{sec-conclusion} concludes the paper.

\section{Problem definition and preliminaries} \label{sec-preliminary}
In the \textit{multimodal transportation with ridesharing} (\textit{MTR}) problem, we have a centralized system, and for every fixed time interval, the system receives a set $\mathcal{A} = D \cup R$ of participant trips with $D \cap R = \emptyset$, where $D$ is the set of driver trips and $R$ is the set of rider trips.
Each trip in $\mathcal{A}$ is expressed by an integer label $i$, so an integer labeled trip $i \in \mathcal{A}$ may be referred to as a driver trip or a rider trip.
Each trip consists of an individual (driver or rider), a vehicle (for driver trip only) and some requirements.
For brevity, we usually call a driver trip just a \textit{driver} and a rider trip a \textit{rider}.
A connected public transit network with a fixed timetable $T$ is given.
The timetable $T$ contains each transit vehicle's departure and arrival times for each transit stop/station (a transit vehicle includes bus, metro train, rail and so on).
We assume that given an earliest departure time from any source $o$ and a destination $d$ in the public transit network, the fastest travel time from $o$ to $d$ (including transfer time) can be computed from $T$ quickly.
Given an earliest departure time $dt$ chosen by a rider $i \in R$, a \emph{public transit route} $\hat{\pi}_i(dt)$ for rider $i$ is a travel plan using only public transportation, whereas a \emph{ridesharing route} $\pi_i(dt)$ for rider $i$ is a travel plan using a combination of public transportation and ridesharing to reach $i$'s destination satisfying $i$'s requirements.
The multimodal transportation with ridesharing (MTR) problem asks to provide at least one feasible route ($\pi_i(dt)$ or $\hat{\pi}_i(dt)$) for every rider $i \in R$. We denote an instance of the MTR problem by $(N,\mathcal{A},T)$, where $N$ is an edge-weighted directed graph (road network) that represents locations reachable by both private and public transportation.
We call a public transit station or stop just \emph{station}.
The terms rider and passenger are used interchangeably (although passenger emphasizes a rider who has accepted a ridesharing route).

The requirements of each trip $i$ in $\mathcal{A}$ are specified by $i$'s parameters submitted by the individual.
The parameters of a trip $i$ contain an origin location $o_i$, a destination location $d_i$, an earliest departure time $\alpha_i$, a latest arrival time $\beta_i$ and a maximum trip time.
A driver trip $i$ also contains a capacity of the vehicle, a limit on the number of stops a driver wants to make to pick-up/drop-off passengers, and an optional path to reach its destination.
The maximum trip time of a driver $i$ includes a travel time from $o_i$ to $d_i$ and a detour time limit $i$ can spend for offering ridesharing service.
A rider trip $i$ also contains an acceptance threshold $\theta_i$ for a ridesharing route $\pi_i(\alpha_i)$, that is, $\pi_i(\alpha_i)$ is given to rider $i$ if $t(\pi_i(\alpha_i)) \leq \theta_i \cdot t(\hat{\pi}_i(\alpha_i))$ for every public transit route $\hat{\pi}_i(\alpha_i)$ and $0 < \theta_i \leq 1$, where $t(\cdot)$ is the travel time of a route.
Such a route $\pi_i(\alpha_i)$ is called an \emph{acceptable ridesharing route} (acceptable route for brevity).
For example, suppose the fastest public transit route $\hat{\pi}_i(\alpha_i)$ takes 100 minutes for $i$ and $\theta_i = 0.9$. An acceptable route $\pi_i(\alpha_i)$ implies that $t(\pi_i(\alpha_i)) \leq \theta_i \cdot t(\hat{\pi}_i(\alpha_i)) = 90$ minutes.
We consider two match types for practical reasons (although our system can extend to support different match types).
\begin{itemize}
\item \textbf{Type 1 (rideshare-transit)}: a driver may make multiple stops to pick-up different passengers, but makes only one stop to drop-off all passengers. In this case, the \emph{pick-up locations} are the passengers' origin locations, and the \emph{drop-off location} is a public station.

\item \textbf{Type 2 (transit-rideshare)}: a driver makes only one stop to pick-up passengers and may make multiple stops to drop-off all passengers. In this case, the \emph{pick-up location} is a public station and the \emph{drop-off locations} are the passengers' destination locations.
\end{itemize}
A rider $j$ is said to be \emph{served} by a driver $i$ if $j$ agrees to be picked-up and dropped-off by $i$ following an acceptable route $\pi_j(\alpha_j)$ from $i$.
Riders and drivers specify which match type to participate in; they are allowed to choose both in hope to increase the chance being selected, but the system will assign them only one of the match types such that the optimization goal of the MTR problem is achieved, which is to assign acceptable routes to as many riders as possible.
Formally, the optimization goal of the MTR problem is to maximize the number of passengers, each of whom is assigned an acceptable route $\pi_i(\alpha_i)$ for every $i \in R$; and we simply call this optimization problem the \textbf{MTR problem} hereafter.

For a driver $i$ and a set $J \subseteq R$ of riders, the set $\sigma(i) = \{i\} \cup J$ is called a \emph{feasible match} if driver $i$ can serve this group $J$ of riders together, using a route in $N$ from $o_i$ to $d_i$, while all requirements (i.e., constraints) specified by the parameters of the trips in $\sigma(i)$ are satisfied collectively. The construction of a feasible match $\sigma(i)$ and its feasibility check, related to the constraints, are described in detail in Section~\ref{sec-compute-matches}.
Two feasible matches $\sigma(i)$ and $\sigma(i')$ are \emph{disjoint} if $\sigma(i) \cap \sigma(i') = \emptyset$.
Then, a solution to the MTR problem is a set of pairwise disjoint feasible matches such that the total number of passengers included in the feasible matches is maximized.
In any solution to the MTR problem, each driver $i$ belongs to at most one feasible match, $i$ serves exactly one group of riders, and any rider $j$ belongs to at most one served group (feasible match).

Intuitively, a rideshare-transit (Type 1) feasible match $\sigma(i)$ is that all passengers in $\sigma(i)$ are picked-up at their origins and dropped-off at a station, and then $i$ drives to destination $d_i$ while each passenger $j$ of $\sigma(i)$ takes public transit to destination $d_j$.
A transit-rideshare (Type 2) feasible match $\sigma(i)$ is that all passengers in $\sigma(i)$ are picked-up at a station and dropped-off at their destinations, and then $i$ drives to destination $d_i$ after dropping-off the last passenger.
We give algorithms to find pairwise disjoint feasible matches to maximize the number of passengers included in the matches.
We describe our algorithms for Type 1 only. Algorithms for Type 2 can be described with the constraints on the drop-off location and pick-up location of a driver exchanged, and we omit the description.
Further, it is not difficult to extend to other match types, such as ridesharing only and park-and-ride, as described in~\citep{Stiglic-COR18}.

\section{Exact algorithm approach} \label{sec-exact}
Our exact algorithm approach for the MTR problem is presented in this section, which is similar to the matching approach described in~\citet{Alonso-Mora-PNAS17} and \citet{Santi-PNAS14} for ridesharing and in~\citet{Ma-EEEIC17} and \citet{Stiglic-COR18} for integration of public transit and ridesharing.
Our approach computes a hypergraph representing all feasible matches in an instance $(N, \mathcal{A},T)$ of the MTR problem, which is similar to \cite{Luo-arXiv22} and \citet{Santi-PNAS14}.
The hypergraph allows a better intuition of the approximation algorithms described in Section~\ref{sec-approximate}.
We give a detailed description of our hypergraph approach as some of the definitions/notations are used in our approximation algorithms.
More importantly, we show that the MTR problem is in fact NP-hard using our ILP formulation from the hypergraph representing.

\subsection{Integer program formulation} \label{sec-exact-IP}
The exact algorithm approach is described in the following.
Given an instance $(N, \mathcal{A},T)$ of the MTR problem, we first compute all feasible matches for each driver $i \in D$.
Then, we create a bipartite (hyper)graph $H(V,E)$, where $V(H) = D\cup R$.
For each driver $i \in D$ and a non-empty subset $J \subseteq R$, if $\{i\} \cup J$ is a feasible match, create a hyperedge $e = (i, J)$ in $E(H)$.
Any driver $i \in D$ or rider $j\in R$ does not belong to any feasible match is removed from $V(H)$, that is, $H$ contains no isolated vertex (such riders must use public transit routes).
For each edge $e=(i,J)$ in $E(H)$, assign a weight $w(e) = |J|$ (representing the number of riders in $e$).
Let $D(H) = D\cap V(H)$ and $R(H) = R \cap V(H)$.
For an edge $e = (i, J)$ in $E(H)$, let $D(e)=i$ (the driver of $e$) and $R(e)=J$ (the passengers of $e$).
For a subset $E' \subseteq E(H)$, let $D(E')= \cup_{e\in E'} D(e)$ and $R(E') = \cup_{e\in E'} R(e)$.
For a vertex $j \in V(H)$, define $E_j = \{e \in E(H) \mid j \in R(e)\}$ to be the set of edges in $E(H)$ incident to $j$.
An example of the hypergraph $H(V,E)$ is given in Figure~\ref{fig-hypergraph}.
\begin{figure}[b]
\centering
\includegraphics[width=.7\linewidth]{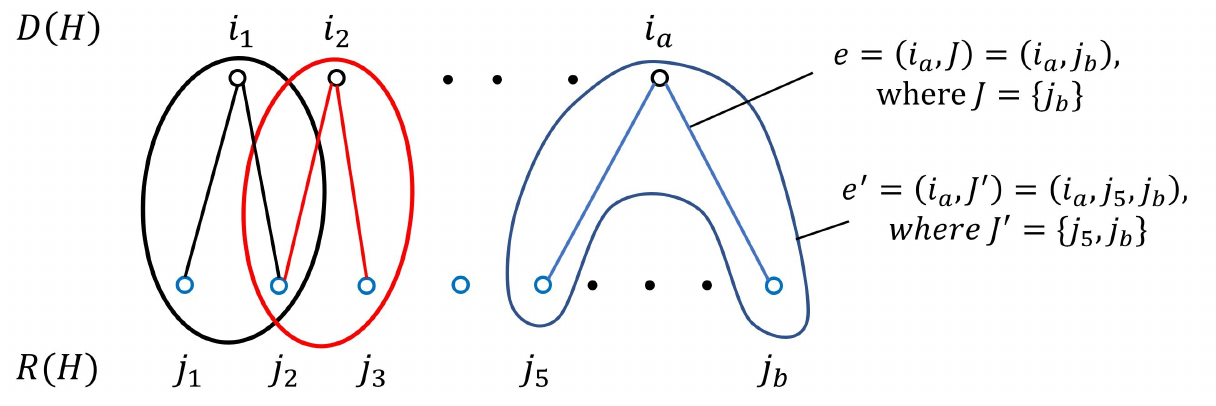}
\caption{A bipartite hypergraph $H(V,E)$ representing all feasible matches of an instance $(N,\mathcal{A},T)$, where $|D(H)|=a$ and $|R(H)|=b$.}
\label{fig-hypergraph}
\end{figure}
To solve the MTR problem, we give an integer linear programming (ILP) formulation:
\begin{alignat}{4}
 & \text{maximize }   &        &\sum_{e \in E(H)} w(e) \cdot x_{e} & \qquad \label{obj-1}\\
 & \text{subject to } & \qquad &\sum_{e \in E_j} x_{e} \leq 1,   & & \forall \text{ } j \in \mathcal{A} \label{constraint-1}\\
 &                    &        &x_{e} \in \{0,1\}, & &\forall \text{ } e \in E(H) \label{constraint-2}
\end{alignat}
The binary variable $x_e$ indicates whether the edge $e = (i, J)$ is in the solution ($x_e = 1$) or not ($x_e = 0$).
If $x_e = 1$, it means that all passengers of $J$ are assigned to $i$ and can be delivered by $i$ satisfying all constraints.
Inequality (2) in the ILP formulation guarantees that each driver serves at most one feasible set of passengers and each passenger is served by at most one driver.
Note that the ILP \eqref{obj-1}-\eqref{constraint-2} is similar to a set packing formulation.
An advantage of this ILP formulation is that the number of constraints is substantially decreased, compared to traditional ridesharing formulation.

\begin{observation}
A match $\sigma(i)$ for any driver $i \in D$ is feasible if and only if for every subset $J \subseteq (\sigma(i)\setminus\{i\})$, $\{i\} \cup J$ is a feasible match~\citep{Stiglic-TRBM15}.
\label{obs-base-match}
\end{observation}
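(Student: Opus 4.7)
The plan is to prove the two directions of the biconditional separately. The backward implication is immediate: taking $J = \sigma(i)\setminus\{i\}$ in the hypothesis yields that $\{i\}\cup J = \sigma(i)$ is itself a feasible match. All the work is in the forward direction, which is really a \emph{downward-closure} statement: if the whole crew of riders can be served by $i$, then so can any sub-crew.

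For the forward direction, I would start from a feasible rideshare-transit route $\rho$ witnessing $\sigma(i)$, i.e., a route from $o_i$ to $d_i$ that picks up every rider in $\sigma(i)\setminus\{i\}$ at its origin, drops the whole passenger group at a single station $s$, and then proceeds to $d_i$, while honoring every parameter-based constraint of every trip. Given any $J\subseteq \sigma(i)\setminus\{i\}$, I would construct a candidate route $\rho'$ for $\{i\}\cup J$ by taking $\rho$ and simply deleting the pickup stops of the riders in $(\sigma(i)\setminus\{i\})\setminus J$, keeping the same drop-off station $s$ and the same post-$s$ segment to $d_i$ (with the driver idling at any pickup location, if necessary, to respect an individual rider's $\alpha_j$).

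Then I would verify, constraint-by-constraint, that $\rho'$ witnesses the feasibility of $\{i\}\cup J$. The vehicle capacity and the driver's stop-count limit are obviously preserved because both quantities are monotone in the set of served riders. The driver's detour-plus-base travel time can only decrease, so the driver's maximum trip time and latest arrival $\beta_i$ are preserved. For each remaining rider $j\in J$, the driver now reaches $j$'s origin no later than in $\rho$ (the driver may have to wait until $\alpha_j$, which is allowed) and reaches $s$ no later than in $\rho$; consequently $j$ can board a transit connection from $s$ that is no later than the one used in $\rho$, so the arrival time at $d_j$ is no later than before. This keeps $t(\pi_j(\alpha_j))$ at most its value in $\rho$, so both the acceptance-threshold inequality $t(\pi_j(\alpha_j))\leq \theta_j\cdot t(\hat{\pi}_j(\alpha_j))$ and the deadline $\beta_j$ continue to hold.

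The main obstacle I anticipate is purely bookkeeping around the acceptance-threshold constraint: one must be careful that shortening the driver's route does not somehow lengthen a rider's effective travel time (e.g., through a worse transit connection or through a waiting period that is counted). The argument above relies on the interpretation that $t(\pi_j(\alpha_j))$ is the total time from $\alpha_j$ until arrival at $d_j$, so an earlier drop-off at $s$ can only weakly improve it. If instead the waiting time at $s$ were excluded, one could simply let the driver wait at $s$ until the same transit departure used in $\rho$, which reproduces $\rho$'s drop-off time exactly for $j$; either reading makes the constraint go through. A symmetric argument, obtained by swapping the roles of pickup and drop-off, handles the Type 2 (transit-rideshare) case, completing the proof.
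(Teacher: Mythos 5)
The paper offers no proof of this observation at all: it is stated as a known fact and attributed to Stiglic et al.\ (2015), and the text immediately moves on. So there is no ``paper approach'' to compare against; what you have done is supply the missing argument, and it is essentially correct. Your backward direction (take $J=\sigma(i)\setminus\{i\}$) is right, and your forward direction --- delete the pickup stops of the discarded riders from the witnessing route, keep the same station $s$ and tail to $d_i$, and re-verify each constraint --- is the standard downward-closure argument and goes through under the paper's model. Two points are worth making explicit rather than implicit. First, the claim that the driver's route ``can only decrease'' and that each surviving rider reaches $s$ ``no later'' rests on the triangle inequality for the shortest-travel-time metric $t(\cdot,\cdot)$ on the road network ($t(l_{y-1},l_{y+1})\leq t(l_{y-1},l_y)+t(l_y,l_{y+1})$); this holds because the paper always routes between consecutive waypoints by shortest paths, but it is the load-bearing fact and deserves a sentence. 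Second, your worry about a ``worse transit connection'' after an earlier arrival at $s$ does not arise here, because the paper uses a simplified transit model in which the onward leg is a fixed duration $\hat{t}(s,d_j)$ independent of the arrival time at $s$, and the acceptance and deadline checks in Algorithms~1 and~2 are expressed purely in terms of $t_{j_y}+\hat{t}(s,d_j)$, which only shrinks when intermediate stops are removed. With those two clarifications your proof is complete for Type~1, and the symmetric argument for Type~2 is as routine as you say.
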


From Observation~\ref{obs-base-match}, it is not difficult to see that Proposition~\ref{prop-1} holds.
\begin{prop}
Let $D' \subseteq D$ and $P(D')$ be a maximal set of passengers served by $D'$.
There always exists a set of feasible matches for $D'$ such that $\sigma(i) \cap \sigma(i') = \emptyset$ for every $i, i' \in D'$ and $\bigcup_{i \in D'} \sigma(i) \setminus \{i\} = P(D')$.
\label{prop-1}
\end{prop}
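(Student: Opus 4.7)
The plan is to start by unpacking what it means for $P(D')$ to be a maximal set of passengers served by $D'$: by the "served" relation defined in Section~\ref{sec-preliminary}, there must exist a collection of feasible matches $\{\sigma(i) : i \in D'\}$ (one per driver, with $\sigma(i)=\{i\}$ allowed when $i$ carries no passengers) whose rider sets collectively cover $P(D')$. The goal is then to massage this initial collection into a pairwise disjoint one without shrinking the covered passenger set.

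The main tool is Observation~\ref{obs-base-match}: feasibility is closed under taking subsets of the rider portion of a match. Concretely, I would proceed by a simple trimming step. Suppose a passenger $j \in P(D')$ lies in both $\sigma(i)$ and $\sigma(i')$ for some distinct $i,i' \in D'$. Remove $j$ from one of them, say $\sigma(i')$; by Observation~\ref{obs-base-match}, the subset $\sigma(i') \setminus \{j\}$ is still a feasible match for $i'$. Iterating this step over every overlap yields matches that are disjoint on their rider parts, and they are automatically disjoint on the driver side since the indices $i \in D'$ are distinct. Because each trimming step keeps the removed passenger in at least one remaining match, every $j \in P(D')$ still appears in exactly one $\sigma(i)$ after termination, so $\bigcup_{i \in D'} \sigma(i) \setminus \{i\} = P(D')$.

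The only subtle point—really the one obstacle worth naming—is ensuring the starting collection exists and that $P(D')$ is genuinely covered before trimming. This is not a technical challenge but a bookkeeping one: it rests entirely on the fact that "$P(D')$ is served by $D'$" is, by definition, witnessed by some family of feasible matches for the drivers in $D'$. Once that family is in hand, the rest is a one-line consequence of Observation~\ref{obs-base-match}, which is why the paper calls the proposition immediate.
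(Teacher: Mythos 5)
Your proof is correct and follows exactly the route the paper intends: the paper gives no explicit proof, stating only that Proposition~\ref{prop-1} follows from Observation~\ref{obs-base-match}, and your trimming argument (remove a duplicated passenger from all but one overlapping match, with feasibility preserved by subset-closure) is precisely the step being left implicit. Nothing is missing.
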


\begin{theorem}\label{theorem-ILP}
Given a hypergraph $H(V,E)$ for an instance of the multimodal transportation with ridesharing (MTR) problem, 
an optimal solution to the ILP (\ref{obj-1})-(\ref{constraint-2}) is an optimal solution to the MTR problem and vice versa.
\end{theorem}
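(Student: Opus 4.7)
The plan is to establish a value-preserving correspondence between feasible $0/1$ solutions of the ILP~\eqref{obj-1}--\eqref{constraint-2} and collections of pairwise disjoint feasible matches for the instance $(N,\mathcal{A},T)$, so that the two optimization problems have, in a strong sense, the same feasible region and the same objective. Once that correspondence is in place, the theorem follows immediately by taking the maximum on either side.

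For the ILP-to-MTR direction, suppose $\{x_e\}_{e\in E(H)}$ satisfies \eqref{constraint-1}--\eqref{constraint-2}, and let $E' = \{e \in E(H) : x_e = 1\}$. By the construction of $H$, each edge $e \in E'$ has the form $e=(i,J)$ with $\{i\}\cup J$ a feasible match, so I set $\sigma(i) = \{i\}\cup J = \{D(e)\}\cup R(e)$. I would then verify pairwise disjointness using inequality \eqref{constraint-1}: applied at any driver $j=i\in D(H)$ it forces at most one edge of $E'$ to be incident to $i$, so each chosen driver contributes exactly one match; applied at any rider $j\in R(H)$ it forces $j$ to belong to $R(e)$ for at most one edge of $E'$, so the rider sets of the chosen matches are disjoint. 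The objective value satisfies $\sum_{e\in E'} w(e) = \sum_{e\in E'} |R(e)|$, which is exactly the number of passengers served by the resulting collection of matches.

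For the MTR-to-ILP direction, I start from any MTR solution, i.e.\ a family $\{\sigma(i)\}_{i\in D'}$ of pairwise disjoint feasible matches serving a set $P$ of passengers (Proposition~\ref{prop-1} lets me assume without loss of generality that such a family attains the maximum). Each $\sigma(i) = \{i\}\cup J_i$ is by assumption a feasible match, so by the exhaustive construction of $H$ there is a corresponding edge $e_i=(i,J_i)\in E(H)$. Setting $x_{e_i}=1$ for each $i\in D'$ and $x_e = 0$ elsewhere yields a $0/1$ vector; constraint \eqref{constraint-1} at a driver vertex holds because a driver appears in at most one chosen edge (by construction, one per driver), and at a rider vertex it holds because the matches are pairwise disjoint on $R$. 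The objective again equals $\sum_{i\in D'} |J_i| = |P|$, the number of served passengers.

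Because the two maps are inverses of one another on the level of the supports $E'$ and $\{\sigma(i)\}$, and both preserve the objective, an optimal ILP solution transports to an optimal MTR solution and vice versa. The only point requiring care is that the hypergraph really contains an edge for \emph{every} feasible match; this is guaranteed by the construction of $H$, which enumerates all pairs $(i,J)$ with $\{i\}\cup J$ feasible, and by the fact that dropped isolated vertices of $H$ correspond precisely to participants that can appear in no feasible match and thus cannot contribute to either objective. I do not anticipate a genuine obstacle here: the argument is a verification that the hypergraph/ILP encoding is both sound (every ILP solution decodes to a valid match family) and complete (every match family encodes as a feasible ILP solution).
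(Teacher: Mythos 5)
Your proposal is correct and follows essentially the same route as the paper's proof: a value-preserving correspondence between feasible ILP solutions and families of pairwise disjoint feasible matches, with constraint~\eqref{constraint-1} giving disjointness and the weights $w(e)=|R(e)|$ giving the passenger count. You simply spell out in detail the two directions that the paper compresses into a reference to Proposition~\ref{prop-1} and the word ``obviously.''
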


\begin{proof}
From inequality (\ref{constraint-1}) in the integer program, the solution found by the integer program is always feasible to the MTR problem.
By Proposition~\ref{prop-1} and objective function~(\ref{obj-1}), an optimal solution to the ILP (\ref{obj-1})-(\ref{constraint-2}) is an optimal solution to the MTR problem.
Obviously, an optimal solution to the MTR problem is an optimal solution to the ILP (\ref{obj-1})-(\ref{constraint-2}).
\end{proof}

\subsection{Computing feasible matches} \label{sec-compute-matches}
Let $i$ be a driver in $D$ and $n_i$ be the capacity of $i$ (the maximum number of riders $i$ can serve at once). The maximum number of feasible matches for a single driver $i$ is $\sum_{p = 1}^{n_i} \binom{|R|}{p}$.
Assuming the capacity $n_i$ is a very small constant (which is reasonable in practice), the above summation is polynomial in $R$, that is, $O((|R|+1)^{n_i})$ (partial sums of binomial coefficients).
Let $K = \max_{i \in D} {n_i}$ be the maximum capacity among all vehicles/drivers.
Then, in the worst case, $|E(H)| = O(|D| \cdot (|R|+1)^K)$.

In the following, we describe how to compute all feasible matches between drivers and riders in $\mathcal{A}=D\cup R$, given an instance $(N, \mathcal{A}, T)$ of the MTR problem.
During the computation of a feasible match $\sigma(i)=\{i\}\cup J$ for some driver $i \in D$ and subset $J \subseteq R$, the route $r_i$ (actual travel path) of shortest travel time is computed for $i$ to serve every one of $\sigma(i)$, assuming a shortest path from one location to another can be computed from roadmap network $N$.
Because the number of locations $i$ needs to visit for a feasible match $\sigma(i)$ is limited, enumerating all possible locations to compute $r_i$ is still quick (detailed description is given in Section~\ref{subsection-alg-feas-all}).
The general procedure to compute all feasible matches is similar to~\citet{Stiglic-COR18} with some minor differences to further extend and overcome the limitations of~\citet{Stiglic-COR18}, as mentioned in the related work (Section~\ref{subsec-related-work}).
Further, some definitions are required to show that the route $r_i$ computed for driver $i$ indeed has the shortest travel time.
Hence, we give a full description for computing all feasible matches.

Computing all feasible matches between $D$ and $R$ is done in two phases.
In phase one, for each driver $i$, we find all feasible matches $\sigma(i)=\{i,j\}$ with one rider $j$.
In phase two, for each driver $i$, we compute all feasible matches $\sigma(i)=\{i,j_1,..,j_p\}$ with $p$ riders, based on the previously computed feasible matches $\sigma(i)$ with $p-1$ riders, for $p=2$ upto the number of riders $i$ can serve.
Before describing how to compute the feasible matches, we first introduce some notations and specify the feasible match constraints we consider in more detail.
Each trip $i \in \mathcal{A}$ is specified by the parameters $(o_i, d_i, n_i, z_i, p_i, \delta_i, \alpha_i, \beta_i, \gamma_i, \theta_i)$, where the parameters are summarized in Table~\ref{table-notation} along with other notation.
\begin{table}[!tp]
\footnotesize
\setlength\tabcolsep{4pt}
\centering
   \begin{tabular}{c | l}
   	\hline
   	\textbf{Notation}  & \textbf{Definition}                                         \\ \hline
   	$o_i$              & Origin (start location) of $i$ (a vertex in $N$)             \\
   	$d_i$              & Destination of $i$ (a vertex in $N$)                         \\
   	$n_i$              & Number of seats (capacity) of $i$ available for passengers (driver only)   \\
   	$z_i$              & Maximum detour time driver $i$ willing to spend for offering ridesharing services    \\
   	$p_i$       	    & An optional preferred path of $i$ from $o_i$ to $d_i$ in $N$ (driver only)     	\\
   	$\delta_i$         & Maximum number of stops $i$ willing to make to pick-up passengers for match     \\
   	                   & Type 1 and to drop-off passengers for match Type 2 \\
   	$\alpha_i$         & Earliest departure time of $i$                                     	\\
   	$\beta_i$          & Latest arrival time of $i$                                         	\\
    $\gamma_i$         & Maximum trip time of $i$                           					\\
    $\theta_i$         & Acceptance threshold ($0 \leq \theta_i < 1$) for a ridesharing route $\pi_i(\alpha_i)$ (rider only)  \\
    $\pi_i(\alpha_i)$ & Route for $i$ using a combination of public transit and ridesharing (rider only) \\
    $\hat{\pi}_i(\alpha_i)$  & Route for $i$ using only public transit (rider only) 		\\
    $d(\pi_i(\alpha_i))$     & The driver of ridesharing route $\pi_i(\alpha_i)$		   	\\
    $t(p_i)$           & Shortest travel time for traversing path $p_i$ by private vehicle	\\
    $t(\pi_i(\alpha_i))$ \& $t(\hat{\pi}_i(\alpha_i))$   & Shortest travel time for traversing route $\pi_i(\alpha_i)$ and $\hat{\pi}_i(\alpha_i)$ resp.   \\
    $t(u,v)$ \& $\hat{t}(u,v)$   & Shortest travel time from $u$ to $v$ by private vehicle and public transit resp. \\ \hline
   \end{tabular}
\caption{Parameters for a trip announcement $i$.}
\label{table-notation}
\end{table}
The maximum trip time $\gamma_i$ of a driver $i$ can be calculated as $\gamma_i = \min\{\gamma_i, t(p_i) + z_i\}$ if $p_i$ is given, where $t(p_i)$ is the shortest travel time on path $p_i$; otherwise $\gamma_i=\min\{\gamma_i, t(o_i,d_i) + z_i\}$, where $t(o_i,d_i)$ is the shortest travel time of a path from $o_i$ to $d_i$. For a rider $j$, $\gamma_j$ is more flexible; it is default to be $\gamma_j = t(\hat{\pi}_j(\alpha_j))$ in our experiment, where $\hat{\pi}_j(\alpha_j)$ is the fastest public transit route.

For a driver $i \in D$ and a set $J \subseteq R$ of riders, the set $\sigma(i) = \{i\} \cup J$ is called a {\em feasible match} if driver $i$ can serve this group $J$ of riders together while all requirements (constraints) specified by the parameters of the trips in $\sigma(i)$ are satisfied collectively, as listed below:
\begin{enumerate}
\item \textit{Ridesharing route constraint}: for $J=\{j_1,\ldots,j_p\}$, there is a path $r_i = (o_i,o_{j_1},\ldots,o_{j_p}$, $s,d_i)$ in $N$, where $s$ is the drop-off location for Type 1 match;
or there is a path $r_i = (o_i,s,d_{j_1},...,d_{j_p},d_i)$ in $N$, where $s$ is the pick-up location for Type 2 match.
Note that if $p_i$ is given and detour limit $z_i = 0$, path $r_i = p_i$ for either match type (assuming driver $i$ specifies a station $s$).
Otherwise, the centralized system computes the path $r_i$.

\item \textit{Capacity constraint}: limits the number of passengers a driver can serve, $1\leq |J| \leq n_i$ with the assumption $n_i \geq 1$.

\item \textit{Acceptable constraint}: each passenger $r_j \in J$ is given an acceptable route $\pi_j(\alpha_j)$ such that $t(\pi_j(\alpha_j)) \leq \theta_j \cdot t(\hat{\pi}_j(\alpha_j))$ for $0 < \theta_j \leq 1$, where the ridesharing part of $\pi_j(\alpha_j)$ is a subpath of $r_i$ and $\hat{\pi}_j(\alpha_j)$ is the fastest public transit route for $r_j$ given $\alpha_j$.

\item \textit{Travel time constraint}: each trip $j \in \sigma(i)$ departs from $o_j$ no earlier than $\alpha_j$, arrives at $d_j$ no later than $\beta_j$, and the total travel duration of $j$ is at most $\gamma_j$.
The exact application of these time constraints is described in Section~\ref{subsection-alg-feas-single} (Algorithm 1) and Section~\ref{subsection-alg-feas-all} (Algorithm 2).

\item \textit{Stop constraint}: the number of unique locations visited by driver $i$ to pick-up (for Type 1) or drop-off (for Type 2) all passengers of $\sigma(i)$ is at most $\delta_i$.
\end{enumerate}
We make two simplifications in our algorithms:
\begin{itemize}
\item Given an origin $o_j$ and a destination $d_j$ of a rider $j$ with earliest departure time $\alpha_j$ at $o_j$, we use a simplified transit system in our experiments to calculate the fastest public transit route $\hat{\pi}_j(\alpha_j)$ from $o_j$ to $d_j$.

\item We use a simplified model for the transit travel time, transit waiting time and ridesharing service time (time it takes to pick-up and drop-off riders, walking time between locations and stations).
Given the fastest travel time $t(u,v)$ by car from location $u$ to location $v$, we multiply a small constant $\epsilon>1$ with $t(u,v)$ to simulate the transit time and ridesharing service time.
In this model, the transit time and ridesharing service time are considered together, as a whole.
\end{itemize}

\subsubsection{Phase one (Algorithm 1)}\label{subsection-alg-feas-single}
We now describe how to compute a feasible match between a driver and a rider for Type 1. The computation for Type 2 is similar and we omit it.
For every trip $i \in D \cup R$, we first compute the set $S_{do}(i)$ of feasible drop-off locations for trip $i$.
Each element in $S_{do}(i)$ is a \emph{station-time} tuple $(s, \alpha_i(s))$ of $i$, where $\alpha_i(s)$ is the earliest possible time $i$ can reach station $s$.
The station-time tuples are computed by the following preprocessing procedure.
\begin{itemize}[leftmargin=*]
\item We find all feasible station-time tuples for each rider $j \in R$. A station $s$ is \emph{time feasible} for $j$ if $j$ can reach $d_j$ from $s$ within time window $[\alpha_j, \beta_j]$ and $t(o_j,s) + \hat{t}(s,d_j) \leq \min\{\gamma_j, \theta_j \cdot \hat{t}(o_j,d_j)\}$.
	\begin{itemize}
	\item The earliest possible time to reach station $s$ for $j$ can be computed as $\alpha_j(s) = \alpha_j + t(o_j,s)$ without pick-up time and drop-off time. Since we do not consider waiting time and ridesharing service time separately, $\alpha_j(s)$ also denotes the earliest time for $j$ to depart from station $s$.
	
	\item Let $\hat{t}(s,d_j)$ be the travel time of a fastest public route. Station $s$ is \emph{time feasible} if $\alpha_j(s) +\hat{t}(s,d_j) \leq \beta_j$ and $t(o_j,s) + \hat{t}(s,d_j) \leq \min\{\gamma_j, \theta_j \cdot \hat{t}(o_j,d_j)\}$.
	\end{itemize}
	
\item The feasible station-time tuples for each driver $i \in D$ is computed by a similar calculation.
	\begin{itemize}[leftmargin=*]
    \item Without considering pick-up time and drop-off time separately, the earliest arrival time of $i$ to reach $s$ is $\alpha_i(s) = \alpha_i + t(o_i,s)$. Station $s$ is \emph{time feasible} if $\alpha_i(s) + t(s,d_i) \leq \beta_i$ and $t(o_i,s) + t(s,d_i) \leq \gamma_i$.
	\end{itemize}
\end{itemize}

After the preprocessing, Algorithm~\ref{alg-feas-single} finds all feasible matches, each consists of a single rider.
For each pair $(i, j)$ in $D \times R$, let $\eta_i(o_j) = \max\{\alpha_i,\alpha_j - t(o_i,o_j)\}$ be the latest departure time of driver $i$ from $o_i$ such that $i$ can still pick-up $j$ at the earliest; this minimizes the time (duration) needed for driver $i$ to wait for rider $j$. Hence, the total travel time of $i$ is minimized when $i$ uses a path $r_i$ with shortest travel time and departure time $\eta_i(o_j)$.
The process of checking if the match $\sigma(i) = \{i,j\}$ is feasible for all pairs of $(i,j)$ can be performed as in Algorithm~\ref{alg-feas-single}.
\begin{algorithm}[ht]
\small
\begin{algorithmic}[1]
\For {each pair $(i, j)$ in $D \times R$}
      \For {each station $s$ in $S_{do}(i) \cap S_{do}(j)$}
            \State $t_1 = t(o_i,o_j) + t(o_j,s)$; $t_2 = t(o_j,s)$; \hspace*{2mm} // travel duration for $i$ and $j$ to reach $s$ resp.
            \State $t = \eta_i(o_j) + t_1$; \hspace*{15mm} // earliest departure time from station $s$
            \State \parbox[t]{385pt}{%
                  \textbf{if} $(t + t(s, d_i) \leq \beta_i \wedge t_1+t(s, d_i) \leq \gamma_i)$ and $(t + \hat{t}(s, d_j) \leq \beta_j \wedge t_2 + \hat{t}(s, d_j) \leq \min\{\gamma_j, \theta_j \cdot \hat{t}(o_j,d_j)\})$ \textbf{then}}
            \State\hspace*{1.5em} create an edge $(i, j)$ in $E(H)$ for $\sigma(i) = \{i,j\}$
            \State\hspace*{1.5em} \textbf{break} inner for-loop; \hspace*{2mm} // can continue the for-loop for a better route
            \State \textbf{end if}
    \EndFor
\EndFor
\end{algorithmic}
\caption{(Phase one) compute all feasible matches, each consists of a single rider}
\label{alg-feas-single}
\end{algorithm}

\subsubsection{Phase two (Algorithm 2)}\label{subsection-alg-feas-all}
We extend Algorithm~\ref{alg-feas-single} to create matches with more than one rider.
Let $H(V,E)$ be the graph after computing all feasible matches consisting of a single rider (instance computed by Algorithm~\ref{alg-feas-single}).
We start with computing, for each driver $i$, feasible matches consisting of two riders, then three riders, and so on until $\min\{\delta_i,n_i\}$.
Let $\Omega(i)$ be the set of feasible matches found so far for driver $i$ and $\Omega(i,p-1) = \{\sigma(i) \in \Omega(i) \mid |\sigma(i) \setminus \{i\}|=p-1\}$ be the set of matches with $p-1$ riders, and we try to extend $\Omega(i,p-1)$ to $\Omega(i,p)$ for $2\leq p\leq \min\{\delta_i,n_i\}$.
Let $r_i = (l_0,l_1,\ldots,l_p,s,d_i)$ denotes an ordered potential path (travel route) for driver $i$ to pick-up all $p$ riders of $\sigma(i)$ and drop-off them at station $s$, where $l_0$ is the origin of $i$ and $l_y$ is the pick-up location (origin of rider $j_y$), $1 \leq y \leq p$.
We extend the notion of $\eta_i(o_j)$, defined above in Phase one, to every pick-up location of $r_i$.
That is, $\eta_i(l_p)$ is the latest time of $\eta_i$ to depart from $o_i$ to pick-up each of the riders $j_1,\ldots,j_p$ such that the waiting time of $i$ is minimized, and hence, travel time of $i$ is minimized.
We simply call $\eta_i(l_p)$ the latest departure of $i$ to pick-up $\sigma(i)$.
All possible combinations of $r_i$ are enumerated to find a feasible path $r_i$; the process of finding $r_i$ is described in the following.
\begin{itemize}
\setlength\itemsep{0em}
\item First, we fix a combination of $r_i$ such that $|\sigma(i)| \leq n_i + 1$ and $r_i$ satisfies the stop constraint. The visiting order of the pick-up origin locations is known when we fix a path for $r_i$.
\item The algorithm determines the actual drop-off station $s$ in $r_i = (l_0,l_1,\ldots,l_{p},s,d_i)$.
Let $j_{y}$ be the rider corresponds to pick-up location $l_y$ for $1 \leq y \leq p$ and $l_0 = o_i$.
For each station $s$ in $\bigcap_{0 \leq y \leq p} S_{do}(j_y)$, the algorithm checks if $r_i = (l_0,l_1,\ldots,l_{p},s,d_i)$ admits a time feasible path for all trips in $\sigma(i)$ as follows.
    \begin{itemize}
        \item The total travel time (duration) for $i$ from $l_0$ to $s$ is $t_i = t(l_0, l_1) + \cdots + t(l_{p-1},l_{p}) + t(l_p, s)$.
        The total travel time (duration) for $j_y$ from $l_y$ to $s$ is $t_{j_y} = t(l_y,l_{y+1}) + \cdots + t(l_{p-1},l_{p}) + t(l_p, s)$, $1 \leq y \leq p$.
        \item Since the order for $i$ to pick up $j_y$ ($1 \leq y \leq p$) is fixed, $\eta_i(l_p)$ can be calculated as $\eta_i(l_p) = \max\{\alpha_i, \alpha_{j_1} - t(l_0,l_1), \alpha_{j_{2}} - t(l_0,l_1) - t(l_1,l_2), \ldots, \alpha_{j_{p}} - t(l_0,l_1) - \cdots - t(l_{p-1},l_p)\}$.
        The earliest arrival time at $s$ for all trips in $\sigma(i)$ is $t = \eta_i(l_p) + t_i$.
        \item If $t + t(s, d_i) \leq \beta_i$, $t_i + t(s, d_i) \leq \gamma_i$, and for $1\leq y\leq p$, $t + \hat{t}(s, d_{j_{y}}) \leq \beta_{j_{y}}$ and $t_{j_y} + \hat{t}(s, d_{j_{y}}) \leq \theta_{j_{y}} \cdot \hat{t}(o_{j_{y}}, d_{j_{y}})$, then $r_i$ is feasible.
    \end{itemize}
\item If $r_i$ is feasible, add the match corresponds to $r_i$ to $H$. Otherwise, check next combination of $r_i$ until a feasible path $r_i$ is found or all combinations are exhausted.
\end{itemize}
The pseudo code for the above process is given in Algorithm~\ref{alg-feas-all}.
We show that the latest departure $\eta_i(l_p)$ used in Algorithm~\ref{alg-feas-all} indeed minimizes the total travel time of $i$ to reach $l_p$.

\begin{algorithm}[!ht]
\small
\begin{algorithmic}[1]
\For {$i$ = 1 to $|D|$}
    \State $p = 2$;
    \While {($p \leq \min\{\delta_i,n_i\}$ and $\Omega(i,p-1) \neq \emptyset$)}
    \For {each match $\sigma(i)$ in $\Omega(i,p-1)$}
        \For {each $j \in R$ s.t. $j \notin \sigma(i)$}
            \State // check if $\sigma(i) \cup \{j\}$ satisfies Observation~\ref{obs-base-match}, and if not, skip $j$
            \State \textbf{if} {$((\sigma(i) \setminus \{q\}) \cup \{j\}) \in \Omega(i,p-1)$ for all $q \in \sigma(i) \setminus \{i\}$} \textbf{then}
                \State \hspace{0.5cm} \textbf{if} {($\sigma(i) \cup \{j\}$ has not been checked) and (feasibleInsert($\sigma(i), j$))} \textbf{then}
                    \State \hspace{1cm} create an edge $(i, J)$ in $E(H)$ for a feasible match $\sigma(i) = \{i\}\cup J$.
                    \State \hspace{1cm} add $\sigma(i) \cup \{j\}$ to $\Omega(i,p)$.
                \State \hspace{0.5cm} \textbf{end if}
         \EndFor
    \EndFor
    \State $p = p + 1$;
    \EndWhile
\EndFor
\\
\textbf{Procedure} feasibleInsert($\sigma(i), j$) \hspace{3mm} // find a feasible path for $i$ to serve $\sigma(i) \cup \{j\}$ if exists
\\ Let $r_i = (l_0,l_1,\ldots,l_p,s,d_i)$ denotes a potential path for driver $i$ to serve trips in $\sigma(i) \cup \{j\}$.
\For {each station $s$ in $\bigcap_{0 \leq y \leq p} S_{do}(j_y)$}
    \For {each combination of $r_i = (l_0,\ldots,l_p,s,d_i)$ that satisfies the stop constraint}
        \State $t_i = t(l_0, l_1) + \cdots + t(l_{p-1},l_{p}) + t(l_p, s)$; $t_{j_y} = t(l_y,l_{y+1}) + \cdots + t(l_{p-1},l_{p}) + t(l_p, s)$;
        \State $t = \eta_i(l_p) + t_i$; /* the earliest arrival time at $s$ for all trips in $\sigma(i)$ */
         \State \parbox[t]{385pt}{%
                  \textbf{if} $(t + t(s, d_i) \leq \beta_i \wedge t_i + t(s, d_i) \leq \gamma_i)$ and (for $1\leq y \leq p$, $t + \hat{t}(s, d_{j_{y}}) \leq \beta_{j_{y}} \wedge t_{j_y} + \hat{t}(s, d_{j_{y}}) \leq \min\{\gamma_j, \theta_j \cdot \hat{t}(o_{j_{y}}, d_{j_{y}})\})$] \textbf{then}\strut}
            \State\hspace*{1.5em} \Return True;
            \State \textbf{end if}
    \EndFor
\EndFor
\\ \Return False;
\end{algorithmic}
\caption{(Phase two) compute all feasible matches}
\label{alg-feas-all}
\end{algorithm}

\begin{theorem}
Given a feasible path $r_i = (l_0,\ldots,l_p,s,d_i)$ for driver $i$ to serve $p$ passengers in a match $\sigma(i)$.
The latest departure time $\eta_i(l_p)$ calculated above minimizes the total travel time of $i$ to reach $l_p$.
\end{theorem}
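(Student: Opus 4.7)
The plan is to fix the feasible path $r_i = (l_0,\ldots,l_p,s,d_i)$ and think of the driver's departure time $\tau$ from $l_0 = o_i$ as the only free parameter (subject to $\tau \geq \alpha_i$), and then show that the total travel time from $l_0$ to $l_p$, viewed as a function of $\tau$, is minimized precisely when $\tau = \eta_i(l_p)$. So I would first set up the arrival/departure recurrence: let $D_0 = \tau$ and, for $1 \leq y \leq p$, let $D_y = \max\{D_{y-1} + t(l_{y-1},l_y),\ \alpha_{j_y}\}$, where $D_y$ is the time the driver is ready to leave $l_y$ after having picked up $j_y$ (so $D_y - \tau$ is the elapsed driver time upon reaching $l_y$).

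Next I would unroll this recurrence by induction on $y$. Writing $T_y = \sum_{k=0}^{y-1} t(l_k,l_{k+1})$ for the pure driving time from $l_0$ to $l_y$, the induction hypothesis is
\[
D_y \;=\; \max\bigl\{\tau + T_y,\ \alpha_{j_1} + (T_y - T_1),\ \alpha_{j_2} + (T_y - T_2),\ \ldots,\ \alpha_{j_y}\bigr\}.
\]
The base case $y=1$ is immediate from the definition of $D_1$, and the inductive step follows by distributing the outer $\max$ with $\alpha_{j_y}$ over the inductive expression for $D_{y-1} + t(l_{y-1},l_y)$. Setting $y=p$ yields a closed form for $D_p$, and hence for the driver's total travel time
\[
D_p - \tau \;=\; \max\bigl\{T_p,\ (\alpha_{j_1} - \tau) + (T_p - T_1),\ \ldots,\ (\alpha_{j_{p-1}} - \tau) + (T_p - T_{p-1}),\ \alpha_{j_p} - \tau\bigr\}.
\]

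With that expression in hand, the rest is a monotonicity argument. The first entry $T_p$ is a constant lower bound (the pure driving time, which is obviously unavoidable), while every other entry is non-increasing in $\tau$. Hence $D_p - \tau \geq T_p$ for every $\tau$, with equality if and only if each remaining term is at most $T_p$, which is equivalent to $\tau \geq \alpha_{j_y} - T_y$ for all $1 \leq y \leq p$. Combining with the feasibility requirement $\tau \geq \alpha_i$, the minimum is attained exactly when $\tau \geq \max\bigl\{\alpha_i,\ \max_{1 \leq y \leq p}(\alpha_{j_y} - T_y)\bigr\} = \eta_i(l_p)$. In particular, $\tau = \eta_i(l_p)$ is feasible and achieves $D_p - \tau = T_p$, which is the minimum possible total travel time from $l_0$ to $l_p$.

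The only real obstacle I anticipate is bookkeeping in the inductive unrolling: one must carefully distribute the outer $\max$ with $\alpha_{j_y}$ over the $y-1$ inner terms and verify that the telescoping of driving-time increments $T_{y-1} - T_k + t(l_{y-1},l_y) = T_y - T_k$ holds uniformly. Once the closed form is in place, the monotonicity argument is essentially immediate and does not depend on how one resolves the ambiguity between ``arrival at $l_p$'' and ``ready to leave $l_p$'', because at $\tau = \eta_i(l_p)$ the condition $\tau \geq \alpha_{j_p} - T_p$ guarantees no waiting at $l_p$, so both quantities coincide with $\tau + T_p$.
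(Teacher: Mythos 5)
Your proof is correct and rests on the same core idea as the paper's: an induction over the pick-up sequence showing that departing at $\eta_i(l_p)$ eliminates all waiting, so the travel time collapses to the unavoidable pure driving time $T_p$. Your version is somewhat more explicit than the paper's (you unroll the waiting recurrence into a closed form in the departure time $\tau$ and then optimize over $\tau$, whereas the paper inducts directly on the optimality claim), but the argument is essentially the same.
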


\begin{proof}
Prove by induction. For the base case $\eta_i(l_1) = \max\{\alpha_i,\alpha_{j_{1}} - t(l_0,l_{1})\}$, and by choosing departure time $\eta_i(l_1)$, driver $i$ does not need to wait for rider $j_1$ at $\alpha_{j_{1}}$. Hence, using a shortest (time) path from $l_0$ to $l_1$ with departure time $\eta_i(l_1)$ minimizes the travel time of $i$ to pick-up $j_1$.
Assume the lemma holds for $1 \leq y-1 < p$, that is, $\eta_i(l_{y-1})$ minimizes the total travel time of $i$ to reach $l_{y-1}$. We prove for $y$.
From the calculation of $\eta_i(l_{y-1})$, $\eta_i(l_{y}) = \max\{\eta_i(l_{y-1}), \alpha_{j_{y}} - t(l_0,l_{1}) - t(l_{1},l_{2}) - \cdots - t(l_{y-1},l_y)\}$. By the induction hypothesis, $\eta_i(l_{y})$ minimizes the total travel time of $i$ when using a shortest path $(l_0,\ldots,l_y)$.
\end{proof}

The running time of Algorithm~\ref{alg-feas-all} heavily depends on the number of subsets of riders to be checked for feasibility.
One way to speed up Algorithm~\ref{alg-feas-all} is to use dynamic programming (or memoization) to avoid redundant checks on a same subset.
For each feasible match $|\sigma(i)|$ of $p-1$ passengers for a driver $i \in D$, we store every feasible path $r_i = (l_0,l_1,\ldots,l_{p-1},s,d_i)$ and extend from $r_i$ to insert a new trip to minimize the number of ordered potential paths we need to test.
We can further make sure that no path $r_i$ is tested twice during execution.
First, the set $R$ of riders is given a fixed ordering (based on the integer labels).
For a feasible path $r_i$ of a driver $i$, the check of inserting a new rider $j$ into $r_i$ is performed only if $j$ is larger than every rider in $r_i$ according to the fixed ordering.
Furthermore, A heuristic approach to speed up Algorithm~\ref{alg-feas-all} is given at the end of Section~\ref{sec-instances}.

\section{Approximation algorithms} \label{sec-approximate}
We show that the MTR problem is NP-hard and give approximation algorithms for the problem.
When every edge in $H(V,E)$ consists of only two vertices (one driver and one rider), the ILP~(\ref{obj-1})-(\ref{constraint-2}) formulation is equivalent to the maximum weight matching problem, which can be solved in polynomial time.
However, if the edges contain more than two vertices, they become hyperedges. In this case, the ILP~(\ref{obj-1})-(\ref{constraint-2}) becomes a formulation of the maximum weighted set packing problem (MWSP), which is NP-hard in general~\citep{Garey79,Karp72}.
In fact, the ILP~(\ref{obj-1})-(\ref{constraint-2}) formulation gives a special case of MWSP (due to the structure of $H(V,E)$).
We first show that this special case is also NP-hard, and by Theorem~\ref{theorem-ILP}, the MTR problem is NP-hard.

\subsection{NP-hardness}
It was mentioned in~\citet{Santi-PNAS14} that their minimization problem related to shareability hyper-network is NP-complete, which is similar to the MTR problem formulation. However, an actual reduction proof was not described.
We prove the MTR problem is NP-hard by a reduction from a special case of the maximum 3-dimensional matching problem (3DM).
An instance of 3DM consists of three disjoint finite sets $A$, $B$ and $C$, and a collection $\mathcal{F} \subseteq A \times B \times C$.
That is, $\mathcal{F}$ is a collection of triplets $(a,b,c)$, where $a \in A, b \in B$ and $c \in C$.
A 3-dimensional matching is a subset $\mathcal{M} \subseteq \mathcal{F}$ such that all sets in $\mathcal{M}$ are pairwise disjoint.
The decision problem of 3DM is that given $(A, B, C, \mathcal{F})$ and an integer $q$, decide whether there exists a matching $\mathcal{M} \subseteq \mathcal{F}$ with $|\mathcal{M}| \geq q$.
We consider a special case of 3DM: $|A| = |B| = |C| = q$; it is still NP-complete~\citep{Garey79,Karp72}.
Given an instance $(A,B,C,\mathcal{F})$ of 3DM with $|A| = |B| = |C| = q$, we construct an instance $H(V,E)$ (bipartite hypergraph) of the MTR problem as follows:
\begin{itemize}
\setlength\itemsep{0em}
\item Create a set of drivers $D(H) = A$ with capacity $n_i=2$ for every driver $i\in D(H)$ and a set of riders $R(H) = B \cup C$.

\item For each $f \in \mathcal{F}$, create a hyperedge $e(f)$ in $E(H)$ containing elements $(a,b,c)$, where $a$ represents a driver and $b, c$ represent two different riders.
Further, create edges $e'(f) = (a, b)$ and $e''(f) = (a, c)$ so that Observation~\ref{obs-base-match} is satisfied.
\end{itemize}

\begin{theorem} \label{theorem-nphard}
The MTR problem is NP-hard.
\end{theorem}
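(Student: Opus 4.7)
The plan is to verify correctness of the reduction already set up in the excerpt and then invoke Theorem~\ref{theorem-ILP}. Concretely, I will show that the restricted 3DM instance $(A,B,C,\mathcal{F})$ with $|A|=|B|=|C|=q$ admits a matching of size at least $q$ if and only if the constructed hypergraph MTR instance $H(V,E)$ admits a solution of objective value at least $2q$. Since the construction is clearly polynomial-time and the restricted 3DM is NP-complete, this yields NP-hardness of MTR.

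For the forward direction, let $\mathcal{M}\subseteq\mathcal{F}$ with $|\mathcal{M}|=q$ (the bound $|\mathcal{M}|\le q$ is forced by $|A|=q$). I would take $E^* = \{e(f) : f \in \mathcal{M}\}$. Pairwise disjointness of the triples in $\mathcal{M}$, which share no element of $A$, $B$, or $C$, translates verbatim into vertex-disjointness of the corresponding hyperedges in $H$, so $E^*$ satisfies constraint~(\ref{constraint-1}); its objective value is $\sum_{e\in E^*} w(e) = 2q$.

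For the backward direction, suppose $E^*\subseteq E(H)$ satisfies constraint~(\ref{constraint-1}) with $\sum_{e\in E^*} w(e)\ge 2q$. Each hyperedge touches exactly one driver, and constraint~(\ref{constraint-1}) forces at most one edge incident to each driver, so $|E^*|\le |D(H)|=q$. Since driver capacity is $n_i=2$, every edge weight is at most $2$. Therefore $2q\le \sum w(e)\le 2|E^*|\le 2q$, and equality forces $|E^*|=q$ together with every $e\in E^*$ having weight exactly $2$; each such $e$ must then be of the form $e(f)$ for some triple $f\in\mathcal{F}$. Taking $\mathcal{M}=\{f : e(f)\in E^*\}$, vertex-disjointness of the selected hyperedges in $V(H)=A\cup B\cup C$ exactly encodes pairwise disjointness of the triples, so $\mathcal{M}$ is a 3DM of size $q$.

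The main obstacle I anticipate is the interface between the abstract hypergraph $H(V,E)$ and the formal MTR instance $(N,\mathcal{A},T)$: the reduction outputs a hypergraph, whereas MTR is defined over a road network and timetable. Theorem~\ref{theorem-ILP} already equates solving MTR with solving the ILP~(\ref{obj-1})-(\ref{constraint-2}) on the induced hypergraph, so NP-hardness of the latter transfers to MTR. If one insists on an explicit realization, it can be produced by placing the trips of each triple $f=(a,b,c)$ at collocated origins with tight triple-specific time windows, chosen so that the only feasible matches containing driver $a$ are $\{a,b\}$, $\{a,c\}$, and $\{a,b,c\}$; checking the five feasibility constraints is routine but tedious, and is the only place I expect any real care is needed.
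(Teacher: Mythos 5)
Your proposal is correct and follows essentially the same route as the paper's proof: the same forward direction setting $x_{e(f)}=1$ for $f\in\mathcal{M}$, and the same counting argument in the backward direction (at most one edge per driver, weight at most $2$, hence $|E^*|=q$ with every selected edge of weight exactly $2$). Your closing remark about realizing the abstract hypergraph as a concrete instance $(N,\mathcal{A},T)$ is a legitimate point of care that the paper handles only implicitly via Theorem~\ref{theorem-ILP}, but it does not change the argument.
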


\begin{proof}
By Theorem~\ref{theorem-ILP}, we only need to prove the ILP~(\ref{obj-1})-(\ref{constraint-2}) is NP-hard, which is done by showing that an instance $(A,B,C,\mathcal{F})$ of the maximum 3-dimensional matching problem has a solution $\mathcal{M}$ of cardinality $q$ if and only if the objective function value of ILP~(\ref{obj-1})-(\ref{constraint-2}) is $2q$.

Assume that $(A,B,C,\mathcal{F})$ has a solution $\mathcal{M} = \{f_1, f_2,\ldots, f_q\}$. For each $f_i$ ($1 \leq i \leq q$), set the corresponding binary variable $x_{e(f_i)} = 1$ in ILP~(\ref{obj-1})-(\ref{constraint-2}).
Since $f_i \cap f_j = \emptyset$ for $1 \leq i \neq j \leq q$, constraint~(\ref{constraint-1}) of the ILP is satisfied.
Further, each edge $e(f_i)$ corresponding to $f_i\in \mathcal{M}$ has weight $w(e(f_i))=2$, implying the objective function value of ILP~(\ref{obj-1})-(\ref{constraint-2}) is $2q$.

Assume that the objective function value of ILP~(\ref{obj-1})-(\ref{constraint-2}) is $2q$.
Let $X=\{e(f) \in E(H) \mid x_{e(f)} = 1\}$, where $x_{e(f)}$'s are the binary variables of ILP~(\ref{obj-1})-(\ref{constraint-2}).
For every edge $e(f) \in X$, add the corresponding set $f \in \mathcal{F}$ to $\mathcal{M}$.
From constraint~(\ref{constraint-1}) of the ILP, $X$ is pairwise disjoint and $|X| \leq |D(H)|$.
Hence, $\mathcal{M}$ is a valid solution for $(A,B,C,\mathcal{F})$ with $|\mathcal{M}| = |X|$.
Since every $e(f) \in X$ contains at most two different riders and $|X| \leq |D(H)| = q$, $|X| = q$ for the objective function value to be $2q$.
Thus, $|\mathcal{M}| = q$.

The size of $H(V,E)$ is polynomial in $q$. It takes a polynomial time to convert a solution of $H(V,E)$ to a solution of the 3DM instance $(A,B,C,\mathcal{F})$ and vice versa. 
\end{proof}

\subsection{Proposed approximation algorithms}
Since solving the ILP~(\ref{obj-1})-(\ref{constraint-2}) formulation exactly is NP-hard, it may require exponential time in a worst case, which is not acceptable in practice.
One way to solve this is to have a time limit on any solver (or exact algorithm). When the time limit is reached, output the current solution or the best solution found so far.
However, this does not guarantee the quality of the solution.
Hence, it is important to use an approximation algorithm as a fallback plan.

The approximation ratio of a $\rho$-approximation algorithm for a maximization problem is defined as $\frac{w(\mathcal{M})}{w(\OPT)} \geq \rho$ for $\rho < 1$, where $w(\mathcal{M})$ and $w(\OPT)$ are the values of approximation and optimal solutions, respectively.
In this section, we give a $(1-\frac{1}{e})$-approximation algorithm and a $\frac{1}{2}$-approximation algorithm for the MTR problem.
Our $(1-\frac{1}{e})$-approximation algorithm (refer to as \textit{\textbf{LPR}}) is a simplified version of the LP-rounding based algorithm obtained by~\citet{Fleischer-SODA06}.
Our $\frac{1}{2}$-approximation algorithm (refer to as \textit{\textbf{ImpGreedy}}) is a simplified version of the simple greedy~\citep{Berman-SWAT00,Chandra-JoA01} discussed in Section~\ref{sec-app-algs}. By computing a solution directly from $H(V,E)$ without solving the independent set/weighted set packing problem, the running time and memory usage of ImpGreedy are significantly improved over the simple greedy.

\subsubsection{Description of the LPR algorithm}
The ILP~(\ref{obj-1})-(\ref{constraint-2}) formulation is a special case of the Separable Assignment Problem (SAP): 
given a set $U$ of bins, a set $I$ of items, a value $f_{ij}$ for assigning item $j$ to bin $i$, and a collection $\mathcal{I}_i$ of subsets of $I$ for each bin $i$, 
SAP asks to find an assignment of items to bins such that each bin $i$ can be assigned at most one set of $\mathcal{I}_i$, each item can be assigned to at most one bin and the total value $f_{ij}$ of the assigned item is maximized.
When only one bin $i$ is considered, the problem is called the single-bin subproblem of SAP.
It can be seen that the ILP~(\ref{obj-1})-(\ref{constraint-2}) formulation of a hypergraph $H(V,E)$ is a special case of SAP, where the bins are drivers, items are riders and the edges of $H$ are $\cup_{i\in U}\mathcal{I}_i$ with unit value $f_{ij}$ for all drivers $i$ and riders $j$.

Given a $\beta$-approximation algorithm for the single-bin subproblem of SAP,~\citet{Fleischer-SODA06} obtained a local-search $(\frac{\beta}{\beta+1}-\epsilon)$-approximation algorithm ($\epsilon>0)$ and an LP-rounding based $((1-\frac{1}{e})\beta)$-approximation algorithm for SAP.
Both of these algorithms approximate the ILP~(\ref{obj-1})-(\ref{constraint-2}).
The local-search $(\frac{\beta}{\beta+1}-\epsilon)$-approximation algorithm presented by ~\citet{Fleischer-SODA06} is not efficient if one wants to have an approximation ratio as close to 1/2 as possible, assuming $\beta\approx 1$.
This is because the number of iterations of the local-search algorithm is inverse-related to $\epsilon$.
An LP for SAP is given in~\citet{Fleischer-SODA06}, but it can have exponential number of variables due to $|\mathcal{I}_i|$ can be exponentially large in general.
By the assumption that the maximum capacity $K$ of all vehicles is a small constant, $|\mathcal{I}_i|$ is polynomially bounded in our case.
From this and unit value, the single-bin subproblem of the MTR problem can be solved efficiently ($\beta=1$).
This gives a $(1-\frac{1}{e})$-approximation algorithm for the MTR problem.
More importantly, the LP of ILP~(\ref{obj-1})-(\ref{constraint-2}) can be solved directly because $|E(H)|$ ($|\mathcal{I}_i|$) is polynomially bounded.
For completeness, we describe the LPR algorithm using our notation as follows.

\begin{enumerate}
\item Obtain a linear programming LP of ILP~(\ref{obj-1})-(\ref{constraint-2}) by relaxing the 0-1 variables $x_e$ to nonnegative real variables; and solve the LP.

\item Independently for each driver $i \in D(H)$, assign $i$ a match $\sigma(i)=\{i\}\cup J$ corresponding to the edge $e=(i,J)$ with probability $x_e$ (based on all edges containing $i$, namely, for all $e \in E_i$ such that $x_e>0$).
Let $M$ be the resulting intermediate solution, which contains a set of feasible matches.

\item For any rider $j \in R(H)$, let $M_j = \{\sigma(i)\in M \mid j \in \sigma(i)\}$ be the set of matches in $M$ containing $j$.
If $|M_j|\geq 2$, then remove $r_j$ from every match of $M_j$ except one match (any one match) of $M_j$.
Finally, remove from $M$ every match $\sigma(i)=\{\eta_i\}$.
\end{enumerate}
The matches in $M$ are pairwise disjoint.
From Step 2, no two matches of $M$ contain a same driver.
From Step 3, no two matches of $M$ contain a same rider.
In Step 3, after the removal of a set of riders $J$ from a match $\sigma(i) \in M$, $\sigma(i)\setminus J$ is still a feasible match if $|\sigma(i)\setminus J|\geq 2$ by Observation~\ref{obs-base-match}.
Therefore, $M$ is a feasible solution to an instance $(N,\mathcal{A},T)$ of the MTR problem.

\begin{theorem}\label{theorem-RLP}
Let $\OPT$ be the objective function value of the ILP~(\ref{obj-1})-(\ref{constraint-2}) formulation, which is the maximum number of riders can be served.
Then the expected value $Q$ of the rounded solution $M$ of Algorithm LPR is at least $(1-\frac{1}{e})\OPT$.
\end{theorem}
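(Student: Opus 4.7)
The plan is to apply the standard $(1-1/e)$ LP-rounding analysis via linearity of expectation over riders. Let $x^*$ denote an optimal solution of the LP relaxation of the ILP~\eqref{obj-1}--\eqref{constraint-2}; then $\sum_{e\in E(H)} w(e)\,x^*_e \geq \OPT$ because any integer optimum is a feasible fractional solution. First I would observe that a rider $j\in R(H)$ appears in the final solution $M$ exactly when $j$ is contained in at least one match chosen in Step~2, because Step~3 only discards duplicate copies of $j$ and, by Observation~\ref{obs-base-match}, the match retaining $j$ remains feasible after any such removal. Hence it suffices to lower bound $\Pr[j\text{ served}]$ for each $j$.

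Next, since Step~2 samples each driver's match independently, and driver $i$ selects a match containing rider $j$ with probability $p_{ij}=\sum_{e\in E_i:\, j\in R(e)} x^*_e$, I would write
\[
\Pr[j\text{ not served}] \;=\; \prod_{i\in D(H)} (1-p_{ij}) \;\leq\; \exp\!\Bigl(-\sum_{i\in D(H)} p_{ij}\Bigr) \;=\; e^{-y_j},
\]
where $y_j := \sum_{e\in E_j} x^*_e \leq 1$ by constraint~\eqref{constraint-1} applied to $j$. The first inequality is just $1-p\leq e^{-p}$ applied termwise. Because $y\mapsto 1-e^{-y}$ is concave on $[0,1]$ and agrees with the chord $(1-1/e)\,y$ at $y=0$ and $y=1$, concavity gives $\Pr[j\text{ served}]\geq 1-e^{-y_j}\geq (1-1/e)\,y_j$.

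Summing over riders and swapping the order of summation then finishes the argument:
\[
Q \;=\; \sum_{j\in R(H)} \Pr[j\text{ served}] \;\geq\; \Bigl(1-\tfrac{1}{e}\Bigr)\sum_{j\in R(H)}\sum_{e\in E_j} x^*_e \;=\; \Bigl(1-\tfrac{1}{e}\Bigr)\sum_{e\in E(H)} |R(e)|\,x^*_e \;\geq\; \Bigl(1-\tfrac{1}{e}\Bigr)\OPT,
\]
using $w(e)=|R(e)|$ in the last step. There is no real obstacle in the probabilistic chain, which is classical; the only bookkeeping points to verify are that Step~2 is a legitimate experiment (the per-driver constraint $\sum_{e\in E_i} x^*_e\leq 1$, obtained from \eqref{constraint-1} applied to the driver vertex, leaves a nonnegative residual probability for ``driver $i$ picks no match'') and that the rider deduplication in Step~3 preserves feasibility of each surviving match, which is exactly Observation~\ref{obs-base-match}.
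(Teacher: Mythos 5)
Your proof is correct, but it takes a genuinely different route from the paper's. The paper proves Theorem~\ref{theorem-RLP} in two lines by invoking Theorem~2.1 of \citet{Fleischer-SODA06} for the Separable Assignment Problem: letting $\OPT^*$ be the LP optimum, it quotes $Q \geq (1-(1-\frac{1}{m})^m)\OPT^* \geq (1-\frac{1}{e})\OPT$ and appeals to the earlier discussion for feasibility of $M$. You instead rederive the bound from first principles: a per-rider analysis using the independence of the driver-wise sampling in Step~2, the inequality $1-p \leq e^{-p}$, the concavity bound $1-e^{-y} \geq (1-\frac{1}{e})y$ on $[0,1]$, and linearity of expectation, with the exchange of summation $\sum_j \sum_{e \in E_j} x^*_e = \sum_e |R(e)| x^*_e = \sum_e w(e) x^*_e$ tying the per-rider bounds back to the LP objective. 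Your chain is sound, including the two bookkeeping points you flag (constraint~\eqref{constraint-1} applied to the driver vertex makes Step~2 a legitimate lottery, and Observation~\ref{obs-base-match} guarantees the deduplicated matches stay feasible); note also that the match retaining rider $j$ in Step~3 has at least two elements, so the final removal of singleton matches never deletes a served rider. What the paper's citation buys is brevity and the marginally sharper factor $1-(1-\frac{1}{m})^m$; what your version buys is a self-contained argument that makes explicit exactly where the LP constraints, the independence of the rounding, and the down-closedness of feasible matches enter — which is arguably more illuminating given that this ILP is only a special case of SAP.
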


\begin{proof}
Let $\OPT^*$ be the objective function value of the LP relaxation. Then $\OPT^*\geq \OPT$.
From Theorem 2.1 in~\citet{Fleischer-SODA06}, $Q\geq (1-(1-\frac{1}{m})^m)\OPT^*\geq (1-\frac{1}{e}) \OPT$, where $m=|M|$.
Since $M$ is a feasible solution as explained above, the theorem holds.
\end{proof}

\subsubsection{Description of the ImpGreedy algorithm}
Algorithm ImpGreedy is similar to the $\frac{1}{K+1}$-approximation algorithm obtained by~\citet{Santi-PNAS14} assuming the maximum capacity $K$ among all vehicles (drivers) is at least two.
However, a detailed analysis for the approximation ratio of their greedy algorithm is not presented in~\citet{Santi-PNAS14}.
Hence, in this section, we describe our ImpGreedy algorithm along with a complete proof for its constant $\frac{1}{2}$-approximation ratio.
For the hypergraph $H(V,E)$ constructed for an instance $(N,\mathcal{A},T)$ of the MTR problem, 
denoted by $\Sigma \subseteq E(H)$ is the current partial solution computed by ImpGreedy (recall that each edge of $E(H)$ represents a feasible match).
Let $P(\Sigma)=\bigcup_{e \in \Sigma} R(e)$, called the \textit{covered passengers}.
Initially, $\Sigma = \emptyset$.
In each iteration, we add an edge with the most number of uncovered passengers to $\Sigma$, that is, select an edge $e$ such that $|R(e)|$ is maximum, and then add $e$ to $\Sigma$.
Remove $E_e = \cup_{j \in e} E_j$ from $E(H)$ ($E_j$ is defined in Section~\ref{sec-exact-IP}).
Repeat until $|P(\Sigma)| = |R(H)|$ or $|\Sigma| = |D(H)|$.
The pseudo code of ImpGreedy is shown in Algorithm~\ref{alg-new-approx}.

\begin{algorithm}
\small
\begin{algorithmic}[1]
\State \textbf{Input:} The hypergraph $H(V,E)$ for problem instance $(N,\mathcal{A},T)$.
\State \textbf{Output:} A solution $\Sigma$ to $(N,\mathcal{A},T)$ with $\frac{1}{2}$-approximation ratio.
\State $\Sigma = \emptyset$; $P(\Sigma) = \emptyset$;
\While{($|P(\Sigma)| < |R(H)|$ and $|\Sigma| < |D(H)|$)}
    \State compute $e = \text{argmax}_{e \in E(H)} |R(e)|$;
    $\Sigma = \Sigma \cup \{e\}$; update $P(\Sigma)$; remove $E_e$ from $E(H)$;
\EndWhile
\end{algorithmic}
\caption{ImpGreedy}
\label{alg-new-approx}
\end{algorithm}

\subsubsection{Analysis of ImpGreedy}
In ImpGreedy, when an edge $e$ is added to $\Sigma$, $E_e$ is removed from $E(H)$, so Property~\ref{property-gamma} holds for $\Sigma$.
Further, the edges in $\Sigma$ are pairwise vertex-disjoint, implying $\Sigma$ is a feasible solution.
\begin{property}
For every $i \in D(H)$, at most one edge $e$ from $E_i$ can be selected in any solution.
\label{property-gamma}
\end{property}
Let $\Sigma = \{x_1, x_2,\ldots, x_a\}$ be a solution found by Algorithm ImpGreedy, where $x_i$ is the $i^{th}$ edge added to $\Sigma$.
Throughout the analysis, we use $\OPT$ to denote an optimal solution, that is, $\OPT$ is a set of edges that are pairwise vertex-disjoint and $R(\OPT) \geq R(\Sigma)$.
Further, $\Sigma_i = \bigcup_{1 \leq b \leq i} x_b$ for $1 \leq i \leq a$, $\Sigma_0 = \emptyset$ with $R(\Sigma_0) = \emptyset$, and $\Sigma_a = \Sigma$.
Since each edge $e$ of $E(H)$ represents a feasible match, we overload any edge $x_i \in \Sigma$ to denote a match as well.
For each $x_i\in \Sigma$, by Property~\ref{property-gamma}, there is at most one $y \in \OPT$ with $D(y)=D(x_i)$.
We order $\OPT$ and introduce dummy edges to $\OPT$ such that $D(y_i) = D(x_i)$ for $1 \leq i\leq a$.
Formally, for $1\leq i\leq a$, define
\begin{small}
\[
\OPT(i)=\{y_1,\ldots,y_i \mid 1\leq b \leq i, D(y_b)=D(x_b) \text{ if } y_b \in \OPT, \text{ otherwise } y_b \text{ is a dummy edge}\}.
\]
\end{small}%
A dummy edge $y_b\in \OPT(i)$ is defined as $D(y_b) = D(x_b)$ with $R(y_b)=\emptyset$.
Notice that there can be edges $y$ in $OPT$ such that $D(y) \neq D(x)$ for every $x \in \Sigma$.
Such edges are in $\OPT\setminus \OPT(a)$.

\begin{lemma}
Let $\OPT$ be an optimal solution and $\Sigma = \{x_1, \ldots, x_a\}$ be a solution found by ImpGreedy.
For any $1 \leq i \leq a$, $|R(y_i) \setminus R(\Sigma_{i-1})| \leq |R(x_i)|$.
\label{lemma-max-gap}
\end{lemma}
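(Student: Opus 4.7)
The plan is to reduce the lemma to the greedy choice made at iteration $i$ by exhibiting a ``shrunken'' version of $y_i$ that still lives in the residual hypergraph at that iteration. If it does, the $\argmax$ rule immediately gives the desired inequality.

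First, I would dispose of the trivial case: if $R(y_i)\setminus R(\Sigma_{i-1})=\emptyset$, then the left-hand side is $0$ and the bound holds vacuously (using only that $|R(x_i)|\geq 1$, which follows from $x_i\in E(H)$). So assume $R(y_i)\setminus R(\Sigma_{i-1})\neq\emptyset$.

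Next, I would construct the surrogate edge
\[
y_i' \;=\; \{D(y_i)\}\,\cup\,\bigl(R(y_i)\setminus R(\Sigma_{i-1})\bigr).
\]
Since $y_i$ is a feasible match (dummy edges are excluded by $R(y_i)\setminus R(\Sigma_{i-1})\neq\emptyset$), Observation~\ref{obs-base-match} tells us that any subset of its riders, together with the same driver, is also a feasible match; hence $y_i'$ corresponds to an edge in $E(H)$ of the original hypergraph.

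The key step is to verify that $y_i'$ still belongs to the residual $E(H)$ at the start of iteration $i$. Two kinds of removals could have killed it in earlier iterations: (i) a prior selection $x_b$ ($b<i$) sharing the driver $D(y_i')=D(x_i)$, or (ii) a prior selection $x_b$ sharing a rider in $R(y_i')$. Case (i) cannot occur, because by the ordering convention we have $D(x_b)\neq D(x_i)=D(y_i')$ for all $b<i$ (the drivers in $\Sigma$ are pairwise distinct by Property~\ref{property-gamma}). Case (ii) cannot occur either, since $R(y_i')$ was defined precisely to exclude $R(\Sigma_{i-1})=\bigcup_{b<i}R(x_b)$. Therefore $y_i'\in E(H)$ when ImpGreedy computes its argmax at step $i$.

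The greedy rule then forces $|R(x_i)|\geq |R(y_i')|=|R(y_i)\setminus R(\Sigma_{i-1})|$, which is exactly the claim. The main subtlety I anticipate is the bookkeeping around dummy edges and the use of Observation~\ref{obs-base-match} to guarantee that the shrunken set is still a valid edge of the hypergraph; everything else is a direct consequence of how $E_e$ is purged at each iteration and the fact that drivers of selected edges are distinct.
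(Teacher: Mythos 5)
Your proof is correct and follows essentially the same route as the paper's: both apply Observation~\ref{obs-base-match} to obtain a shrunken edge with driver $D(y_i)$ and rider set $R(y_i)\setminus R(\Sigma_{i-1})$, and then invoke the greedy $\arg\max$ rule at iteration $i$. Your version is in fact slightly more careful than the paper's, since you explicitly verify (via the distinct drivers of $\Sigma$ and the exclusion of $R(\Sigma_{i-1})$) that the surrogate edge still survives in the residual edge set, a step the paper compresses into ``from the algorithm.''
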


\begin{proof}
If $|R(y_i)| \leq |R(x_i)|$, then the lemma holds.
Suppose $|R(y_i)| > |R(x_i)|$.
Since the algorithm selects $x_i$ instead of $y_i$, it must mean that $R(y_i) \cap R(\Sigma_{i-1}) \neq \emptyset$, and $y_i$ has been removed from $E(H)$ while searching for $x_i$.
By Observation~\ref{obs-base-match}, there is an edge $e_{z} \in E(H)$ such that $D(e_z) = D(y_i)$ and $R(e_{z}) = R(y_i) \setminus R(\Sigma_{i-1})$; and $|R(e_z)|\leq |R(x_i)|$ from the algorithm.
Hence, $|R(y_i) \setminus R(\Sigma_{i-1})| \leq |R(x_i)|$.
\end{proof}

\begin{lemma}
Let $\OPT' = \OPT\setminus \OPT(a)$. Then, $R(\OPT') \subseteq R(\Sigma)$.
\label{lemma-opt-subset}
\end{lemma}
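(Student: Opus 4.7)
The plan is to argue by contradiction. Suppose there exists an edge $y \in \OPT'$ and a rider $j \in R(y)$ with $j \notin R(\Sigma)$. I will produce a feasible hyperedge that survives the entire execution of ImpGreedy and then show its existence clashes with either termination condition of the while loop.

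First I would form the residual set of uncovered passengers of $y$, namely $R^{*} = R(y) \setminus R(\Sigma)$, and observe that $R^{*} \neq \emptyset$ because $j \in R^{*}$. Since $y$ corresponds to a feasible match and $R^{*}$ is a non-empty subset of $R(y)$, Observation~\ref{obs-base-match} guarantees that the hyperedge $e^{*} = (D(y), R^{*})$ is in $E(H)$ at the moment the hypergraph is constructed.

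Next I would argue that $e^{*}$ is never deleted during the execution of ImpGreedy. An edge is removed from $E(H)$ only when it shares a vertex with some chosen $x_i$. By the very definition of $\OPT' = \OPT \setminus \OPT(a)$, the driver $D(y)$ is different from $D(x_i)$ for every $i$, so $e^{*}$ cannot be removed on account of its driver. For its riders, note that $R(e^{*}) = R^{*}$ is disjoint from $R(\Sigma) = \bigcup_i R(x_i)$ by construction, so no $x_i$ shares a rider with $e^{*}$ either. Hence $e^{*}$ is present in $E(H)$ at termination.

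Finally I would exploit the two stopping conditions of the while loop to close the argument. If the loop exits because $|P(\Sigma)| = |R(H)|$, every rider including $j$ is in $R(\Sigma)$, contradicting $j \notin R(\Sigma)$. If it exits because $|\Sigma| = |D(H)|$, then the drivers $D(x_1),\ldots,D(x_a)$ are pairwise distinct and exhaust $D(H)$, which would force $D(y)$ to coincide with some $D(x_i)$ and contradict $y \in \OPT'$. Either way we reach a contradiction, and the lemma follows.

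The main obstacle is conceptual rather than computational: the natural candidate $y$ itself may well have been removed from $E(H)$ because it shares a rider with some earlier $x_i$, so one cannot derive the contradiction directly from $y$. The key idea is to shrink $y$ down to its residual $e^{*}$ on the uncovered passengers; Observation~\ref{obs-base-match} is exactly the tool that keeps this shrunken match feasible, and its survival in $E(H)$ is what forces the greedy's termination conditions to collide with the assumption $j \notin R(\Sigma)$.
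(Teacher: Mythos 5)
Your proof is correct and follows essentially the same route as the paper's: shrink $y$ to the residual edge on its uncovered riders via Observation~\ref{obs-base-match}, note that this edge is vertex-disjoint from $\Sigma$, and conclude the greedy could not have terminated without selecting it. Your explicit case analysis of the two while-loop termination conditions is a slightly more careful rendering of the paper's one-line claim that ``the algorithm should have added $e_z$ to $\Sigma$,'' but the underlying argument is identical.
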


\begin{proof}
Assume for contradiction that there exists an edge $y \in \OPT'$ s.t. $R(y) \setminus R(\Sigma) \neq \emptyset$.
By Observation~\ref{obs-base-match}, there is an edge $e_z \in E(H)$ such that $D(e_{z}) = D(y)$ and $R(e_z) = R(y)\setminus R(\Sigma)$, and $e_z \notin \Sigma$.
Since $e_z$ is not incident to any vertex of $D(\Sigma) \cup R(\Sigma)$, the algorithm should have added $e_z$ to $\Sigma$, a contradiction.
\end{proof}

\begin{theorem}
Given a hypergraph instance $H(V,E)$, Algorithm ImpGreedy computes a solution $\Sigma$ for $H$ such that $\frac{|R(\Sigma)|}{|R(\OPT)|} \geq \frac{1}{2}$, where $\OPT$ is an optimal solution, with running time $O(|D(H)| \cdot |E(H)|)$ and $|E(H)| = O(|D| \cdot (|R|+1)^K)$.
\label{theorem-ImpGreedy}
\end{theorem}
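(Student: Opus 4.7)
The plan is to prove the $\tfrac{1}{2}$-approximation bound by carefully decomposing the optimal rider set $R(\OPT)$ using the two lemmas already established, and then derive the running time by bounding the cost per greedy iteration times the number of iterations.

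I would begin by writing $\OPT = \OPT' \,\cup\, (\OPT \cap \OPT(a))$, which are disjoint parts of $\OPT$. Because $\OPT$ is a feasible solution, its edges are pairwise vertex-disjoint, so
\[
|R(\OPT)| \;=\; |R(\OPT')| \,+\, |R(\OPT(a))|,
\]
using that dummy edges contribute no riders. Lemma~\ref{lemma-opt-subset} gives $R(\OPT') \subseteq R(\Sigma)$ immediately. The key refinement to reach ratio $\tfrac{1}{2}$ rather than $\tfrac{1}{3}$ is to further partition $R(\Sigma)$ into $S_1 = R(\Sigma) \cap R(\OPT')$, $S_2 = R(\Sigma) \cap R(\OPT(a))$, and $S_3 = R(\Sigma) \setminus R(\OPT)$; these three sets are disjoint (since $R(\OPT')$ and $R(\OPT(a))$ are disjoint as parts of the disjoint family $\OPT$), so $|S_1| + |S_2| \le |R(\Sigma)|$.

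Next I would bound $|R(\OPT(a)) \setminus R(\Sigma)|$. For every rider $r$ in that set there is a unique $i$ with $r \in R(y_i) \setminus R(\Sigma)$, and since $R(\Sigma_{i-1}) \subseteq R(\Sigma)$, also $r \in R(y_i) \setminus R(\Sigma_{i-1})$. The sets $T_i := R(y_i) \setminus R(\Sigma)$ are pairwise disjoint because the $y_i$'s are, so by Lemma~\ref{lemma-max-gap},
\[
|R(\OPT(a)) \setminus R(\Sigma)| \;=\; \sum_{i=1}^{a} |T_i| \;\le\; \sum_{i=1}^{a} |R(y_i) \setminus R(\Sigma_{i-1})| \;\le\; \sum_{i=1}^{a} |R(x_i)| \;=\; |R(\Sigma)|.
\]
Combining, $|R(\OPT)| = |S_1| + |S_2| + |R(\OPT(a)) \setminus R(\Sigma)| \le |R(\Sigma)| + |R(\Sigma)| = 2|R(\Sigma)|$, which is the claimed ratio.

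For the running time I would observe that the while-loop runs at most $|D(H)|$ times, since each iteration commits one new driver to $\Sigma$ (by Property~\ref{property-gamma}, the removal of $E_e$ kills every remaining edge sharing that driver). Each iteration scans the current $E(H)$ once to pick a maximum-cardinality edge, updates $P(\Sigma)$, and deletes incident edges; all of this is $O(|E(H)|)$. This gives the $O(|D(H)| \cdot |E(H)|)$ bound, and the previously derived estimate $|E(H)| = O(|D| \cdot (|R|+1)^K)$ yields the stated complexity. The main obstacle is the rider-bookkeeping step: one must be careful that the bound $|S_1| + |S_2| \le |R(\Sigma)|$ genuinely comes from disjointness of $R(\OPT')$ and $R(\OPT(a))$ inside $\OPT$, rather than naively summing $|R(\OPT')| \le |R(\Sigma)|$ and $|R(\OPT(a)) \cap R(\Sigma)| \le |R(\Sigma)|$, which would only give ratio $\tfrac{1}{3}$.
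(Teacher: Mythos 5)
Your proposal is correct and follows essentially the same route as the paper's proof: both decompose $R(\OPT)$ into the portion already covered by $R(\Sigma)$ (handled via Lemma~\ref{lemma-opt-subset} for $\OPT'$ and the sets $R(\Sigma_{i-1})$ for $\OPT(a)$) and the leftover portion charged edge-by-edge to the greedy picks via Lemma~\ref{lemma-max-gap}, and both bound the loop count by $|D(H)|$ with an $O(|E(H)|)$ scan per iteration. Your explicit partition into $S_1, S_2$ and $R(\OPT(a))\setminus R(\Sigma)$ is just a more carefully bookkept version of the paper's union-bound chain, not a different argument.
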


\begin{proof}
Let $\Sigma = \{x_1,\ldots,x_a\}$, $\OPT(a)$ as defined above, and $\OPT' = \OPT\setminus \OPT(a)$.
From Lemma~\ref{lemma-max-gap}, we get
\[
|\bigcup^a_{i=1} R(y_i) \setminus R(\Sigma_{i-1})| \leq |\bigcup^a_{i=1} R(x_i)| = |R(\Sigma)|.
\]
From this and Lemma~\ref{lemma-opt-subset}, we obtain

\begin{align*}
|R(\OPT)| &= |R(\OPT(a)) \cup R(\OPT')| \\
&\leq |(\bigcup^a_{i=1} (R(y_i) \setminus R(\Sigma_{i-1})) \cup R(\Sigma)) \cup R(\OPT')| \\
&= |(\bigcup^a_{i=1} R(y_i) \setminus R(\Sigma_{i-1}))| + |R(\Sigma)| \\
&\leq |R(\Sigma)| + |R(\Sigma)| = 2|R(\Sigma)|.
\end{align*}

In each iteration of the while-loop, it takes $O(|E(H)|)$ to find an edge $x$ with maximum $|R(x)|$,
and there are at most $|D(H)|$ iterations. Hence, Algorithm ImpGreedy runs in $O(|D(H)| \cdot |E(H)|)$ time.
\end{proof}

\subsection{Approximation algorithms for maximum weighted set packing}\label{sec-app-algs}
Now, we explain the algorithms for the maximum weighted set packing problem, which can also solve the MTR problem.
Given a universe $\mathcal{U}$ and a family $\mathcal{S}$ of subsets of $\mathcal{U}$, a \emph{packing} is a subfamily $\mathcal{C} \subseteq \mathcal{S}$ of sets such that all sets in $\mathcal{C}$ are pairwise disjoint.
In the maximum weighted $k$-set packing problem (MWSP), every subset $S \in \mathcal{S}$ has at most $k$ elements and is given a real weight, and MWSP asks to find a packing $\mathcal{C}$ with the largest total weight.
We can see that the ILP~(\ref{obj-1})-(\ref{constraint-2}) formulation of a hypergraph $H(V,E)$ is a special case of the maximum weighted $k$-set packing problem, where the trips of 
$D(H) \cup R(H)$ is the universe $\mathcal{U}$ and $E(H)$ is the family $\mathcal{S}$ of subsets, and every $e \in E(H)$ is a set in $\mathcal{S}$ representing at most $k = K+1$ trips ($K$ is the maximum capacity of all vehicles).
Hence, solving MWSP also solves the MTR problem.
\citet{Hazan-CC06} showed that the $k$-set packing problem cannot be approximated to within $O(\frac{\text{ln} k}{k})$ in general unless P = NP.
\citet{Chandra-JoA01} presented a $\frac{3}{2(k+1)}$-approximation and a $\frac{5}{2(2k+1)}$-approximation algorithms (refer to as \textit{\textbf{BestImp}} and \textit{\textbf{AnyImp}}, respectively), and
~\citet{Berman-SWAT00} presented a $(\frac{2}{k+1})$-approximation algorithm (referred to as \textit{\textbf{SquareImp}}) for the weighted $k$-set packing problem. Here, $k \geq 3$.

The three algorithms (AnyImp, BestImp and SquareImp) in~\citet{Berman-SWAT00} and \citet{Chandra-JoA01}  solve the weighted $k$-set packing problem by first transferring it into a weighted independent set problem, which consists of a vertex weighted graph $G(V,E)$ and asks to find a maximum weighted independent set in $G(V,E)$.
We briefly describe the common local search approach used in these three approximation algorithms.
A \emph{claw} $C$ in $G$ is defined as an induced connected subgraph that consists of an independent set $T_C$ of vertices (called talons) and a center vertex $C_z$ that is connected to all the talons ($C$ is an induced star with center $C_z$).
For any vertex $v \in V(G)$, let $N(v)$ denotes the set of vertices in $G$ adjacent to $v$, called the \emph{neighborhood} of $v$.
For a set $U$ of vertices, $N(U) = \cup_{v \in U} N(v)$.
The \textit{local search} of AnyImp, BestImp and SquareImp uses the same central idea, summarized as follows:
\begin{enumerate}
\setlength\itemsep{0em}
\item The approximation algorithms start with an initial solution (independent set) $I$ in $G$ found by a \textit{simple greedy} (referred to as \textit{\textbf{Greedy}}) as follows: select a vertex $u \in V(G)$ with largest weight and add to $I$.
Eliminate $u$ and all $u$'s neighbors from being selected. Repeatedly select the largest weight vertex until all vertices are eliminated from $G$.
\item While there exists claw $C$ in $G$ w.r.t. $I$ such that independent set $T_C$ improves the weight of $I$ (different for each algorithm),
augment $I$ as $I = (I \setminus N(T_C)) \cup T_C$; such an independent set $T_C$ is called an \emph{improvement}.
\end{enumerate}
To apply these algorithms to the MTR problem, we need to convert the bipartite hypergraph $H(V,E)$ to a weighted independent set instance $G(V,E)$, which is straightforward.
Each hyperedge $e \in E(H)$ is represented by a vertex $v_e \in V(G)$. The weight $w(v_e) = w(e)$ for each $e \in E(H)$ and $v_e \in V(G)$. There is an edge between $v_{e}, v_{e'} \in V(G)$ if $e \cap e' \neq \emptyset$ where $e, e' \in E(H)$.
We observed the following property.

\begin{property}
When the size of each set in the set packing problem is at most $k$ $(|w(e)| = k-1, e \in E(H))$, the graph $G(V,E)$ has the property that it is $(k+1)$-claw free, that is, $G(V,E)$ does not contain an independent set of size $k+1$ in the neighborhood of any vertex.
\end{property}

Applying this property, we only need to search a claw $C$ consists of at most $k$ talons, which upper bounds the running time for finding a claw within $O(n^k)$, where $n = |V(G)|$.
When $k$ is very small, it is practical enough to approximate the ILP~(\ref{obj-1})-(\ref{constraint-2}) formulation of a hypergraph $H(V,E)$ computed by Algorithm~\ref{alg-feas-all}.
It has been mentioned in~\citet{Santi-PNAS14} that the approximation algorithms in~\citet{Chandra-JoA01} can be applied to their ridesharing problem. However, only the simple greedy (\textit{Greedy}) was implemented in~\citet{Santi-PNAS14}.
Notice that ImpGreedy (Algorithm 3) is a simplified version of the Greedy algorithm, and Greedy is used to get an initial solution in algorithms AnyImp, BestImp and SquareImp. From Theorem~\ref{theorem-ImpGreedy}, we have Corollary~\ref{corollary-approximate}.
\begin{corollary}
Each of Greedy, AnyImp, BestImp and SquareImp algorithms computes a solution to $H(V,E)$ with $\frac{1}{2}$-approximation ratio.
\label{corollary-approximate}
\end{corollary}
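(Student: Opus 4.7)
The plan is to derive the corollary directly from Theorem~\ref{theorem-ImpGreedy} via two observations: (i) Greedy run on $G(V,E)$ produces the same family of selections as ImpGreedy run on $H(V,E)$; and (ii) AnyImp, BestImp, and SquareImp each start from the Greedy solution and only apply local updates whose linear weight is non-decreasing.

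For (i), I would verify that the hypergraph-to-graph reduction preserves both the selection rule and the removal rule of ImpGreedy. Each hyperedge $e \in E(H)$ becomes a vertex $v_e$ with $w(v_e) = w(e) = |R(e)|$, and $v_e$ is adjacent to $v_{e'}$ in $G$ iff $e \cap e' \neq \emptyset$, i.e., iff they share a driver or a rider. Hence picking a maximum-weight vertex in $G$ and deleting its closed neighborhood is exactly the operation of picking an $e$ with maximum $|R(e)|$ and removing $E_e = \bigcup_{j \in e} E_j$ from $E(H)$. So Greedy and ImpGreedy select the same sequence of hyperedges (up to tie-breaking), and Theorem~\ref{theorem-ImpGreedy} immediately yields the $\tfrac{1}{2}$-approximation ratio for Greedy on the MTR problem.

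For (ii), each of the three local-search algorithms is seeded with the Greedy independent set $I_0$ and applies the augmentation $I \leftarrow (I \setminus N(T_C)) \cup T_C$ only when the corresponding claw $C$ constitutes an ``improvement.'' For AnyImp and BestImp the improvement criterion is a strict increase of the linear weight $\sum_{v \in I} w(v)$ by definition, while for SquareImp the same non-decrease of linear weight follows from the squared-weight potential analysis of~\citet{Berman-SWAT00}. Consequently $w(I_{\text{final}}) \geq w(I_0)$, and since $w(v_e) = |R(e)|$, each algorithm covers at least as many riders as Greedy; combining with (i) gives $\tfrac{|R(\Sigma)|}{|R(\OPT)|} \geq \tfrac{1}{2}$. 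The main delicacy I anticipate is precisely this non-decrease claim for SquareImp, whose acceptance rule is phrased in terms of $\sum w(v)^2$ rather than the linear objective; once that point is verified, the corollary reduces to an immediate application of Theorem~\ref{theorem-ImpGreedy}.
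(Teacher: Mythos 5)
Your overall route is the same as the paper's: the paper disposes of this corollary in two sentences by observing that (a) ImpGreedy on $H(V,E)$ is exactly Greedy on $G(V,E)$, and (b) AnyImp, BestImp and SquareImp are seeded with the Greedy solution, so Theorem~\ref{theorem-ImpGreedy} transfers. Your part (i) is a correct and welcome expansion of (a): the reduction $e \mapsto v_e$ with $w(v_e)=|R(e)|$ and adjacency iff $e\cap e'\neq\emptyset$ does make ``pick a maximum-weight vertex and delete its closed neighborhood'' coincide with ``pick $e$ maximizing $|R(e)|$ and delete $E_e$,'' so Greedy inherits the $\frac{1}{2}$ ratio.

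Part (ii), however, contains a concrete inaccuracy that undermines the monotonicity argument as you state it. In \citet{Chandra-JoA01}, AnyImp and BestImp do \emph{not} accept a claw based on a strict increase of the linear weight $\sum_{v\in I} w(v)$; like SquareImp, their improvement criterion is phrased in terms of the squared weight $w^2(I)=\sum_{v\in I}w(v)^2$ (the paper's own experimental section alludes to this via the improvement factor $\alpha>1$ applied to $w^2$). A $w^2$-improvement can strictly decrease the linear weight: if $N(T_C)\cap I$ consists of two vertices of weight $2$ and $T_C$ is a single new vertex of weight $3$, then $9>8$ so the swap is accepted, yet the number of covered riders drops from $4$ to $3$. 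So the delicacy you flag for SquareImp applies equally to AnyImp and BestImp, and ``the final solution is at least as good as the Greedy seed'' is not true step-by-step for any of the three; it needs either a global argument (e.g., the charging analysis in \citet{Chandra-JoA01} showing the output simultaneously satisfies the Greedy guarantee) or a modified acceptance rule. To be fair, the paper's own proof glosses over exactly this point, so your attempt is no less rigorous than the original --- but as written, the justification of (ii) rests on a misstatement of the algorithms' definitions and would fail if taken literally.
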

Since ImpGreedy finds a solution directly on $H(V,E)$ without converting it to an independent set problem $G(V,E)$ and solving it, ImpGreedy is more time and space efficient than the algorithms for MWSP.
In the rest of this paper, Algorithm 3 is referred to as ImpGreedy.

\section{Numerical experiments} \label{sec-experiment}
We create a simulation environment consisting of a centralized system that integrates public transit and ridesharing.
The centralized system receives batches of discrete driver and rider trips continuously.
We implement the approximation algorithms ImpGreedy, LPR, Greedy, AnyImp and BestImp, and an exact algorithm that solves ILP formulation \eqref{obj-1}-\eqref{constraint-2} to evaluate the benefits of having such an integrated transportation system.
The results of SquareImp are not discussed because its performance is the same as AnyImp when using the smallest improvement factor ($\alpha > 1$ in~\citet{Chandra-JoA01}); this is due to the implementation of the independent set instance $G(V,E)$ having a fixed search/enumeration order of the vertices and edges, and each vertex in $V(G)$ has an integer weight.

We use a simplified transit network of Chicago to simulate the public transit and ridesharing.
The data instances generated in our experiments focus more on trips that commute to and from work (to and from the downtown area of Chicago).
To the best of our knowledge, a mass transportation system in large cities integrating public transit and ridesharing has not been implemented in real-life.
There is not any large dataset containing customers that use both public transit and ridesharing transportation modes together.
Hence, we use two related datasets to generate representative instances for our experiments.
One dataset contains transit ridership data, and the other dataset contains ridesharing trips data.
The transit ridership dataset allows us to determine the busiest transit routes, and we use this information to create rider demand in these busiest regions.
We assume riders of longer transit trips would like to reduce their travel duration by using the integrated ridesharing service.
The ridesharing dataset reveals whether there are enough personal drivers willing to provide ridesharing services.
We understand that these drivers may not be the ones who driver their vehicles to work, but at least it shows that there are currently enough drivers to support the proposed transportation system.

\subsection{Description and characteristics of the datasets}
We built a simplified transit network of Chicago to simulate practical scenarios of public transit and ridesharing.
The roadmap data of Chicago is retrieved from OpenStreetMap\footnote{Planet OSM. \url{https://planet.osm.org}}.
We used the GraphHopper\footnote{GraphHopper 1.0. \url{https://www.graphhopper.com}} library to construct the logical graph data structure of the roadmap, which contains 177037 vertices and 263881 edges.
The Chicago city is divided into 77 official community areas, each of which is assigned an area code.
We examined two different datasets in Chicago to reveal some basic traffic pattern (the datasets are provided by the Chicago Data Portal (CDP) and Chicago Transit Authority (CTA)\footnote{CDP. \url{https://data.cityofchicago.org}. CTA. \url{https://www.transitchicago.com}}, maintained by the City of Chicago).
The first dataset contains bus and rail ridership, which shows the monthly averages and monthly totals for all CTA bus routes and train station entries. We denote this dataset as \textit{PTR, public transit ridership}.
The PTR dataset range is chosen from June 1st, 2019 to June 30th, 2019.
The second dataset contains rideshare trips reported by Transportation Network Providers (sometimes called rideshare companies) to the City of Chicago. We denote this dataset as \textit{TNP}.
The TNP dataset range is chosen from June 3rd, 2019 to June 30th, 2019, total of 4 weeks of data.
Table~\ref{table-PTRdata} and Table~\ref{table-TNPdata} show some basic stats of both datasets.
\begin{table}[htbp]
\small
\captionsetup{font=small}
\parbox{.5\linewidth}{
\centering
\begin{tabular}{| p{3.7cm} | p{3.3cm} |}
\hline
Total Bus Ridership & 20,300,416  \\
Total Rail Ridership & 19,282,992 \\ \hline
12 busiest bus routes & 3, 4, 8, 9, 22, 49, 53, 66, 77, 79, 82, 151 \\ \hline
The busiest bus routes selected & 4, 9, 49, 53, 77, 79, 82 \\
\hline
\end{tabular}
\caption{Basic stats of the PTR dataset. \label{table-PTRdata}}
}
\hfill
\parbox{.49\linewidth}{
\centering
\begin{tabular}{| p{4.3cm} | p{2.8cm} |}
\hline
\# of original records & 8,820,037 \\ \hline 
\# of records considered & 7,427,716 \\ 
\# of shared trips & 1,015,329 \\ 
\# of non-shared trips & 6,412,387 \\ \hline
The most visited community areas selected & 1, 4, 5, 7, 22, 23, 25, 32, 41, 64, 76 \\
\hline
\end{tabular}
\caption{Basic stats of the TNP dataset. \label{table-TNPdata}}
}
\end{table}

In the PTR dataset, the total ridership for each bus route is recorded; there are 127 bus routes in the dataset.
We examined the 12 busiest bus routes based on the total ridership. 
7 out of the 12 routes are selected (excluding bus routes that are too close to train stations) as listed in Table~\ref{table-PTRdata} to support the selection of the community areas.
We also selected all the major trains/metro lines within the Chicago area except the Brown Line and Purple Line since they are too close to the Red and Blue lines.
Note that the PTR dataset also provides the total rail ridership. However, it only provides the number of riders entering every station in each day; it does not provide the number of riders exiting a station nor the time related to the entries.

Each record in the TNP dataset describes a passenger trip served by a driver who provides the rideshare service;
a trip record consists of a pick-up and a drop-off time and a pick-up and a drop-off community area of the trip, and exact locations are not provided.
We removed records where the pick-up or drop-off community area is hidden for privacy reason or not within Chicago, which results in 7.4 million ridesharing trips.
We calculated the average number of trips per day departed from and arrived at each area.
The results are plotted in Figure~\ref{fig-OD-pairs}; the community areas that have the highest numbers of departure trips are almost the same as that of the arrival trips.
\begin{figure}[!ht]
\centering
\includegraphics[width=\textwidth]{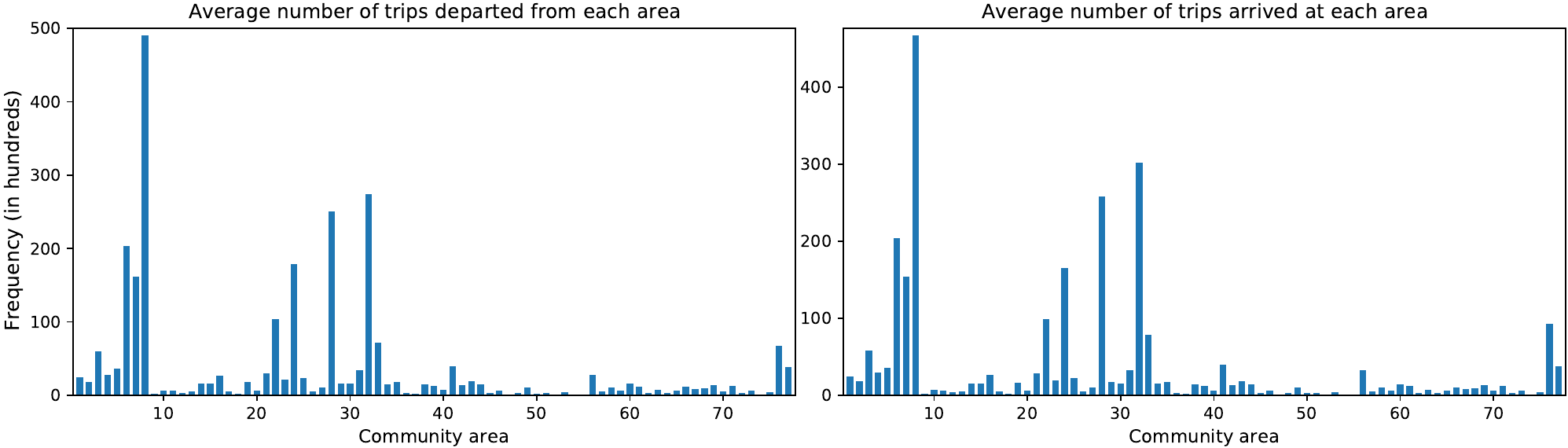}
\captionsetup{font=small}
\caption{The average number of trips per day departed from and arrived at each area.}
\label{fig-OD-pairs}
\end{figure}

We selected 11 of the 20 most visited areas as listed in Table~\ref{table-TNPdata} (area 32 is Chicago downtown, areas 64 and 76 are airports) to build the transit network for our simulation.
From the selected bus routes, trains and community areas (22 areas in total), we created a simplified public transit network connecting the community areas, depicted in Figure~\ref{fig-transit-network}.
\begin{figure}[!ht]
\centering
\includegraphics[width=\textwidth]{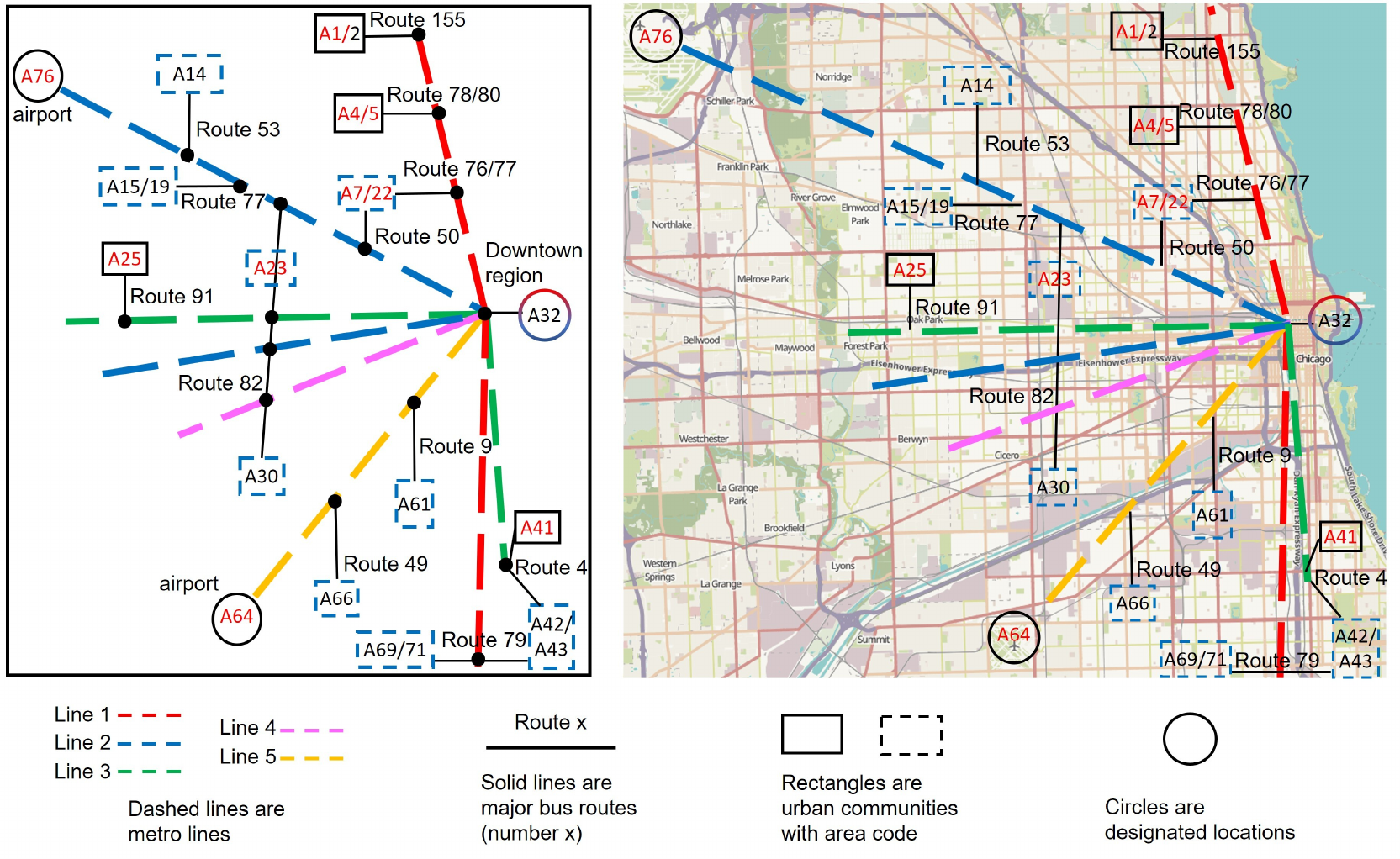}
\captionsetup{font=small}
\caption{Simplified public transit network of Chicago with 19 urban community areas and 3 designated locations (minor bus routes are not shown). Figure on the right has the Chicago City map overlay for scale.}
\label{fig-transit-network}
\end{figure}
Three of the 22 community areas are the \textit{designated locations} which include the downtown region in Chicago and the two airports. We label the rest of the 19 community areas as \textit{urban community areas}.
Each rectangle on the figure represents an \textit{urban community} within one urban community area or across two urban community areas, labeled in the rectangle.
The blue dashed rectangles/urban communities are chosen due to the busiest bus routes from the PTR dataset.
The rectangles/urban communities labeled with red area codes are chosen due to the most visited community areas from the TNP dataset.
The dashed lines are the trains, which resemble the major train services in Chicago. The solid lines are the selected bus routes connecting the urban communities to their closest train stations.
We assume that there is a major bus route travels within each urban community or some minor bus route (not labeled in Figure~\ref{fig-transit-network}) that travels to the nearest train station from each urban community.
From the datasets, many people travel to/from the designated locations (downtown region and the two airports).

The travel time between two locations by car (each location consists of the latitude and longitude coordinates) uses the fastest/shortest route computed by the GraphHopper library.
The shortest paths are \textbf{computed in real-time}, unlike many previous simulations where the shortest paths are pre-computed and stored.
As mentioned in Section~\ref{subsection-alg-feas-single}, transit travel and waiting time (transit time for short) and service time are considered in a simplified model; we multiply a small constant $\epsilon > 1$ to the fastest route to mimic transit time and service time.
For instance, consider two consecutive metro stations $s_1$ and $s_2$. The travel time $t(s_1,s_2)$ is computed by the fastest route traveled by personal cars, and the travel time by train between from $s_1$ to $s_2$ is $\hat{t}(s_1,s_2) = 1.15 \cdot t(s_1,s_2)$. The constant $\epsilon$ for bus service is 2.
Rider trips originated from all locations (except airports) must take a bus to reach a metro station when ridesharing service is not involved.

\subsection{Generating instances}\label{sec-instances}
In our simulation, we partition a day from 6:00 to 23:59 into 72 time intervals (each has 15 minutes), and we only focus on weekdays.
To observe the common ridesharing traffic pattern, we calculated the average number of served passenger trips per hour for each day of the week using the TNP dataset.
The dashed (orange) line and solid (blue) line of the plot in Figure~(\ref{fig-sub-originalTrips}) represent shared trips and non-shared trips, respectively.
\begin{figure}[!ht]
\captionsetup{font=small}
\centering
\begin{subfigure}{.62\textwidth}
  \centering
  \includegraphics[width=1.0\linewidth]{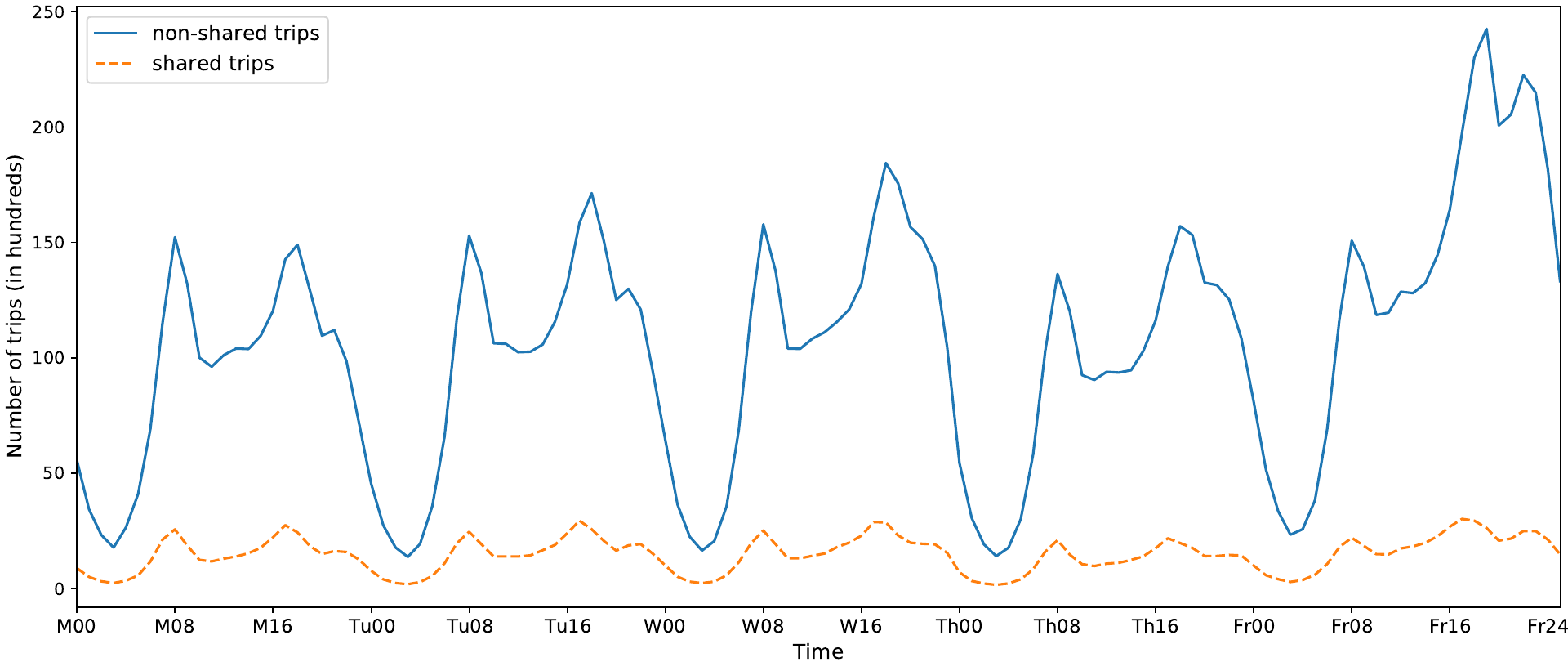}
  \caption{Average numbers of shared and non-shared trips in TNP dataset.}
  \label{fig-sub-originalTrips}
\end{subfigure}%
\hfill
\begin{subfigure}{.37\textwidth}
  \centering
  \includegraphics[width=0.98\linewidth]{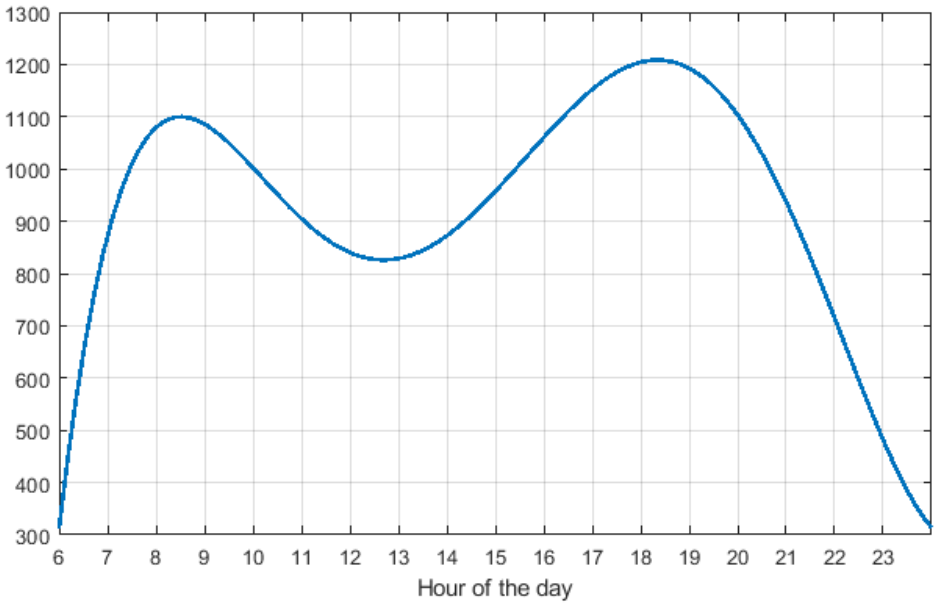}
  \caption{Total number of driver and rider trips generated for each time interval.}
  \label{fig-sub-nTrips}
\end{subfigure}
\caption{Plots for the number of trips for every hour from data and generated.}
\label{fig-nTrips-plot}
\vspace*{-1mm}
\end{figure}
A set of trips are called \emph{shared trips} if this set of trips are matched for the same vehicle consecutively such that their trips may potentially overlap, namely, one or more passengers are in the same vehicle.
The number of shared trips shown in Figure~\ref{fig-sub-originalTrips} suggests that drivers and passengers are willing to share the same vehicle.
For all other trips, we call them \textit{non-shared trips}.
From the plot, the peak hours are between 7:00AM to 10:00AM and 5:00PM to 8:00PM on weekdays for both non-shared and shared trips.
The number of trips generated for each interval roughly follows the function plotted in Figure~(\ref{fig-sub-nTrips}), which is a scaled down and smoothed version of the TNP dataset for weekdays.
For the base instance, the ratio between the number of drivers and riders generated is roughly 1:3 (1 driver and 3 riders) for each interval.
Such a ratio is chosen because it should reflect the system's potential as capacity of 3 is common for most vehicles.
For each time interval, we first generate a set $R$ of riders and then a set $D$ of drivers.
We do not generate a trip where its origin and destination are close. For example, any trip with an origin in Area25 and destination in Area15 is not generated.

\paragraph{Generation of rider trips}
We assume that the numbers of riders entering and exiting a station are roughly the same each day.
Next we assume that the numbers of riders in PTR over the time intervals each day follow a similar distribution of the TNP trips over the time intervals.
Each day is divided into 6 different consecutive time periods (each consists of multiple time intervals):
\textit{morning rush, morning normal, noon, afternoon normal, afternoon rush}, and \textit{evening} time periods.
Each time period determines the probability and distribution of origins and destinations.
Based on the PTR dataset and Rail Capacity Study by CTA~\citep{CTA19}, many riders are going into downtown in the morning and leaving downtown in the afternoon.

For each rider trip $j$ generated, we first randomly decide a pickup area where origin $o_j$ is located within, then decide a dropoff area where destination $d_j$ is located within.
A \emph{pickup area} or \emph{dropoff area} is one of the 22 community areas we selected to build our geological map for the simulation.
For each community area, a set of points spanning the area is defined (each point is represented by a latitude-longitude pair). 
To generate a rider trip $j$ during \textbf{morning rush} time period, the pickup area for $j$ is selected uniformly at random from the list of 22 community areas.
The origin $o_j$ is a point selected uniformly at random from the set of points in the selected pickup area.
Then, we use the standard normal distribution to determine the \emph{dropoff area}, namely, the 22 selected community areas are transformed to follow the standard normal distribution.
Specifically, downtown area is within two SDs (standard deviations), airports are more than two and at most three SDs, and the other urban community areas are more than three SDs away from the mean.
Then, the dropoff area is sampled/selected randomly from this distribution.
The destination $d_j$ is a point selected uniformly at random from the set of points in the selected dropoff area.

The above is repeated until $a_t$ riders are generated, where $a_t + a_t / 3$ (riders + drivers so that it is roughly 1:3 driver-rider ratio) is the total number of trips for time interval $t$ shown in Figure~(\ref{fig-sub-nTrips}).
For any pickup area $c$, let $c_t$ be the number of generated riders originated from $c$ for time interval $t$, that is, $\sum_c c_t = a_t$.
Other time periods follow the same procedure, all urban communities and designated locations can be selected as pickup and dropoff areas.

\begin{enumerate}
\item \textbf{Morning normal} (10:00AM to 12:00AM). For selecting pickup areas, the 22 community areas are transformed to follow the standard normal distribution: urban community areas are within two SDs, downtown is more than two and at most three SDs, and airports are more than three SDs away from the mean; and destination areas are selected using uniform distribution.

\item \textbf{Noon} (12:00PM to 2:00PM). Pickup/dropoff areas are selected uniformly at random from the list of 22 community areas.

\item \textbf{Afternoon normal} (2:00PM to 5:00PM). For selecting pickup areas, downtown and airport are within two SDs and urban community areas are more than two SDs away from the mean.
For selecting dropoff areas, urban community areas are within two SDs, and downtown and airports are more than two SDs away from the mean.

\item \textbf{Afternoon rush} (5:00PM to 8:00PM). For selecting pickup areas, downtown is within two SDs, airports are more than two SDs and at most three SDs, and urban community areas are more than three SDs away from the mean.
For selecting dropoff areas, urban community areas are within two SDs, airports are more than two SDs and at most three SDs, and downtown is more than three SDs away from the mean.

\item \textbf{Evening} (8:00PM to 11:59PM). For both pickup and dropoff areas, urban community areas are within two SDs, downtown is more than two and at most three SDs, and airports are more than three SDs away from the mean.
\end{enumerate}

\paragraph{Generation of driver trips}
We examined the TNP dataset to determine whether, in practice, there are enough drivers who can provide ridesharing service to riders that follow match Types 1 and 2 traffic pattern.
First, we removed any trip from TNP if it is too short (less than 15 minutes or origin and destination are adjacent areas).
We calculated the average number of trips per hour originated from every pre-defined area in the transit network (Figure~\ref{fig-transit-network}), and then plotted the destinations of such trips in a grid heatmap.
In other words, each cell $(c,r)$ in the heatmap represents the the average number of trips per hour originated from area $c$ to destination area $r$ in the transit network (Figure~\ref{fig-transit-network}).
An example is depicted in Figure~\ref{fig-OD-distribution}.
\begin{figure}[!ht]
\centering
\includegraphics[width=\textwidth]{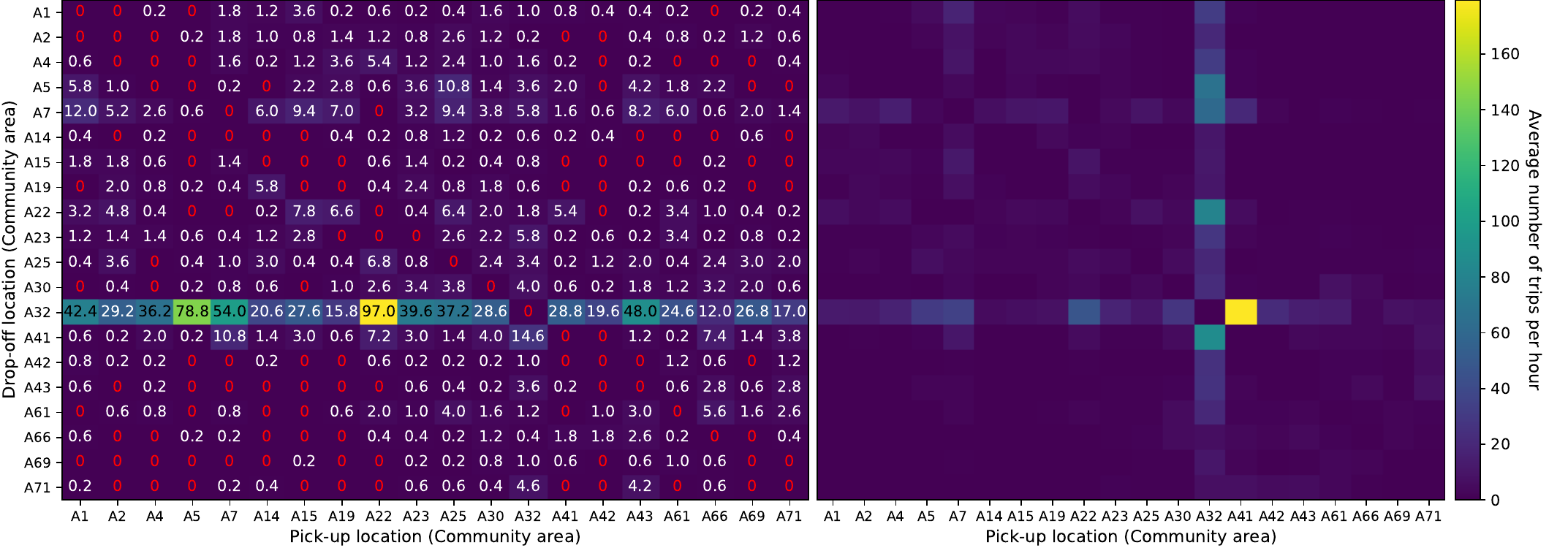}
\caption{Traffic heatmaps for the average number of trips originated from one area (x-axis) during hour 7:00 (left) and hour 17:00 (right) to every other destination area (y-axis).}
\label{fig-OD-distribution}
\end{figure}
From the heatmaps, many trips are going into the downtown area (A32) in the morning; and as time progresses, more and more trips leave downtown. This traffic pattern confirms that there are enough drivers to serve the riders in our simulation.
The number of shared trips shown in Figure~\ref{fig-sub-originalTrips} also suggests that many riders are willing to share a same vehicle.
We slightly reduce the difference between the values of each cell in the heatmaps and use the idea of marginal probability to generate driver trips.
Let $d(c,r,h)$ be the value at the cell $(c,r)$ for origin area $c$, destination $r$ and hour $h$.
Let $P(c,h)$ be sum of the average number of trips originated from area $c$ for hour $h$ (the column for area $c$ in the heatmap corresponds to hour $h$), that is, $P(c, h) = \sum_{r} d(c,r,h)$ is the sum of the values of the whole column $c$ for hour $h$.
Given a time interval $t$, for each area $c$, we generate $c_t/3$ drivers ($c_t$ is defined in Generation of rider trips) such that each driver $i$ has origin $o_i = c$ and destination $d_i = r$ with probability $d(c,r,h)/P(c,h)$, where $t$ is contained in hour $h$.
The probability of selecting an airport as destination is fixed at 5\%.

\paragraph{Deciding other parameters for each trip}
After the origin and destination of a rider or driver trip have been determined, we decide other parameters of the trip.
The capacity $n_i$ of drivers' vehicles is selected from three ranges: the {\em low range} [1,2,3], {\em mid range} [3,4,5], and {\em high range} [4,5,6].
During morning/afternoon peak hours, roughly 95\% and 5\% of vehicles have capacities randomly selected from the low range and mid range, respectively.
It is realistic to assume vehicle capacity is lower for morning and afternoon peak-hour commute.
While during off-peak hours, roughly 80\%, 10\% and 10\% of vehicles have capacities randomly selected from low range, mid range and high range, respectively.
The number $\delta_i$ of stops equals to $n_i$ if $n_i \leq 3$, else it is chosen uniformly at random from $[n_i-2, n_i]$ inclusive.
The detour limit $z_i$ of each driver is within 5 to 20 minutes because traffic is not considered, and transit time and service time are considered in a simplified model.
Earliest departure time $\alpha_i$ of a driver or rider $i$ is from immediate to two time intervals.
Latest arrival time $\beta_i$ of a driver $i$ is at most $1.5 \cdot (t(o_i,d_i) + z_i) + \alpha_i$. 
Latest arrival time $\beta_j$ of a rider $j$ is $\alpha_j + t(\hat{\pi}_j(\alpha_j))$, where $\hat{\pi}_j(\alpha_j)$ is the fastest public transit route for $j$.
The acceptance threshold $\theta_j$ of every rider $j$ is 0.8 for the base instance.
The general information of the base instance is summarized in Table~\ref{table-simulation}.
Note that the earliest departure time all trips generated in the last four time intervals is immediate for computational-result purpose.
\begin{table}[!ht]
\footnotesize
\centering
   \begin{tabular}{ l | p{10.6cm} }
   	\hline
   	Major trip patterns       & from urban communities to downtown and vice versa for peak and off-peak hours, respectively;
trips specify one match type for peak hours and can be in either type for off-peak hours \\
    \# of intervals simulated & Start from 6:00 AM to 11:59 PM; each interval is 15 minutes		\\
   	\# of trips per interval  & varies from [350, 1150] roughly, see Figure~\ref{fig-nTrips-plot}  	\\ 
   	Driver-rider ratio        & 1:3 approximately												\\ 
   	Capacity $n_i$ of vehicles 	  & low: [1,3], mid: [3,5] and high: [4,6] inclusive \\ 
   	Number $\delta_i$ of stops limit 	& $\delta_i=n_i$ if $n_i \leq 3$, or $\delta_i \in [n_i-2,n_i]$ if $n_i \geq 4$		\\
    Earliest departure time $\alpha_i$    & immediate to 2 intervals after a trip $i$ (driver or rider) is generated  \\
    Driver detour limit $z_i$      & 5 minutes to min\{$2 \cdot t(o_i,d_i)$ (driver's fastest route), 20 minutes\}	\\
   	Latest arrival time $\beta_i$ of driver $i$ & $\beta_i \leq 1.5 \cdot (t(o_i,d_i) + z_i) + \alpha_i$                       \\
 	Latest arrival time $\beta_j$ of rider $j$ & $\beta_j = \alpha_j + t(\hat{\pi}_j(\alpha_j))$, where $\hat{\pi}_j(\alpha_j)$ is the fastest public transit route for $j$ with earliest departure time $\alpha_j$ from $o_j$ \\
    Travel duration $\gamma_i$ of driver $i$  & $\gamma_i = t(o_i,d_i) + z_i$   		\\
    Travel duration $\gamma_j$ of rider $j$  & $\gamma_j = t(\hat{\pi}_j(\alpha_j))$, where $\hat{\pi}_j(\alpha_j)$ is the fastest public transit route for $j$   	\\ 
    Acceptance threshold      & 80\% for all riders	(0.8 times the fastest public transit route)           	\\   
    Train and bus travel time   & average at 1.15 and 2 times the fastest route by car, respectively \\ \hline
   \end{tabular}
\captionsetup{font=small}
\caption{General information of the base instance.}
\label{table-simulation}
\end{table}

\paragraph{Reduction configuration procedure.}
When the number of trips increases, the running time for Algorithm~2 and the time needed to construct the $k$-set packing instance (independent set instance) increase significantly. This is due to the increased number of feasible matches for each driver $i \in D$.
In a practical setup, we may restrict the number of feasible matches a driver can have.
Each match produced by Algorithm~1 is called a \emph{base match}, which consists of exactly one driver and one rider.
To make the simulation feasible and practical, we heuristically limit the numbers of base matches for each driver and each rider and the number of total feasible matches for each driver. We use $(x\%, y, z)$, called \emph{reduction configuration} (\emph{Config} for short), to denote that for each driver $i$, the number of base matches of $i$ is reduced to $x$ percentage and at most $y$ total feasible matches are computed for $i$; and for each rider $j$, at most $z$ base matches containing $j$ are used.

After Algorithm~1 is completed. A reduction procedure may be evoked with respect to a Config.
Let $H(V,E)$ be the graph after computing all feasible base matches (instance computed by Algorithm~1 and before Algorithm~2 is executed).
For a trip $i \in \mathcal{A}$, let $E_i$ be the set of base matches of $i$.
The reduction procedure works as follows.
\begin{itemize}
\item First of all, the set of drivers is sorted in descending order of the number of base matches each driver has.

\item Each driver $i$ is then processed one by one.
    \begin{enumerate}
    \item If driver $i$ has at least 10 base matches, then $E_i$ is sorted, based on the number of base matches each rider included in $E_i$ has, in descending order. Otherwise, skip $i$ and process the next driver.
    
    \item For each base match $e=(i,j)$ in the sorted $E_i$, if rider $j$ belongs to $z$ or more other matches, remove $e$ from $E_i$.
    
    \item After above step 2, if $E_i$ has not been reduced to $x\%$, sort the remaining matches in descending order of the travel time from $o_i$ to $o_j$ for remaining matches $e=(\eta_i,r_j)$.
    Remove the first $x'$ matches from $E_i$ until $x\%$ is reached.
    \end{enumerate}
\end{itemize}
The original sorting of the drivers allows us to first remove matches from drivers that have more matches than others.
The sorting of the base matches of driver $i$ in step 1 allows us to first remove matches containing riders that also belong to other matches.
Riders farther away from a driver $i$ may have lower chance to be served together by $i$; this is the reason for the sorting in step 3.

\subsection{Computational results}
We use the same transit network and same set of generated trip data for all algorithms.
All algorithms were implemented in Java, and the experiments were conducted on Intel Core i7-2600 processor with 1333 MHz of 8 GB RAM available to JVM.
To solve the ILP formulation \eqref{obj-1}-\eqref{constraint-2} and the formulation in LPR, we use CPLEX\footnote{IBM ILOG CPLEX v12.10.0}; and we label the algorithm CPLEX uses to solve these ILP formulations by \textbf{Exact}.
Since the optimization goal is to assign acceptable ridesharing routes to as many riders as possible, the performance measure is focused on the number of riders serviceable by acceptable ridesharing routes, followed by the total time saved for the riders as a whole.
We record both of these numbers for each of the algorithms: ImpGreedy, LPR, Exact, Greedy, AnyImp and BestImp.
A rider $j \in R$ is called \emph{served} if $r_j \in \sigma(i)$ for some driver $\eta_i \in D$ such that $\sigma(i)$ belongs to a solution computed by one of the; and we also call such a served rider a \emph{passenger}.

\subsubsection{Results on a base case instance} \label{sec-experiment-base}
The base case instance uses the parameter setting described in Section~\ref{sec-instances} and Config (30\%, 600, 20).
The overall experiment results are shown in Table~\ref{table-base-result}.
\begin{table}[ht]
\scriptsize
\centering
   \begin{tabular}{ l | r | r | r | r | r | r}
   	\hline
& \multicolumn{1}{c|}{ImpGreedy}   & \multicolumn{1}{c|}{LPR}    & \multicolumn{1}{c|}{Exact}	  & \multicolumn{1}{c|}{Greedy}     & \multicolumn{1}{c|}{AnyImp}    & \multicolumn{1}{c}{BestImp}   \\ \hline
  Total number of riders served                & 26597 	  & 22583     & 27940     & 26597      & 27345     & 27360      \\    
  Avg number of riders served per interval     & 369.4	  & 313.7     & 388.1     & 369.4      & 379.8     & 380.0       \\
  Total time saved of all served riders        & 309369.1   & 260427.3  & 324718.4  & 309369.1  & 318729.6  & 318983.9 \\
  Avg time saved of served riders per interval & 4296.8     & 3617.0    & 4510.0    & 4296.8    & 4426.8    & 4430.3     \\
  Avg time saved per served rider & 11.63   & 11.53   & 11.62    & 11.63   & 11.65   & 11.66   \\
  Avg time saved per rider        & 6.68    & 5.75    & 7.17     & 6.68    & 7.03    & 7.04    \\ \hline
	\multicolumn{2}{ l |}{Avg public transit duration per rider} & \multicolumn{5}{l}{30.54 minutes} \\
    \multicolumn{2}{ l |}{Total number of riders and public transit duration} & \multicolumn{5}{l}{45314 and 1384100.97 minutes}\\ \hline
   \end{tabular}
\captionsetup{font=small}
\caption{Base case solution comparison between all algorithms. Every time unit is measured in minute.}
\label{table-base-result}
\end{table}
Although the solutions computed by AnyImp and BestImp are slightly better than that of ImpGreedy, it takes much longer for AnyImp and BestImp to run to completion, as shown in a later experiment (Figure~\ref{fig-configurations-time}).
The average number of riders served per interval is calculated as the total number of riders served divided by 72 (the number of intervals).
The average time saved per served rider is calculated as the total time saved divided by the total number of served riders. 
The results of ImpGreedy and Greedy are aligned since they are essentially the same algorithm: 58.69\% of total number of all riders are assigned acceptable routes and 22.35\% of total time are saved for those riders.
The results of AnyImp and BestImp are similar because of the density of the independent set graph $G(V,E)$ due to Observation~\ref{obs-base-match}.
For AnyImp and BestImp, roughly 60.38\% of total number of all riders are assigned acceptable routes and 23.05\% of total time are saved.
For LPR, 49.8\% of total number of all riders are assigned acceptable routes and 18.82\% of total time are saved. We show that LPR is worse than ImpGreedy in terms of performance and running time in a later experiment.
For Exact, 61.66\% of total number of all riders are assigned acceptable routes and 23.46\% of total time are saved.

The average public transit duration per rider is calculated as the total public transit duration divided by the total number of all riders, which is 30.54 minutes.
The average time saved per rider is calculated as the total time saved divided by the total number of all riders (served and unserved).
From Algorithm Exact, a rider is able to reduce their travel duration from 30.54 minutes to 23.37 minutes on average (save 7.17 minutes) with the integration of public transit and ridesharing.

If we consider only the served riders (26597 for ImpGreedy and 27940 for Exact), the average original public transit duration per served rider is 30.29 (30.30) minutes for ImpGreedy (Exact respectively).
In this case, the average public transit + ridesharing duration per served rider is 18.66 (18.68) minutes, 11.63 (11.62) minutes saved, for ImpGreedy (Exact respectively).
Although this is only a side-effect of the optimization goal of MTR, it reduces riders' travel duration significantly.
The results of these algorithms are not too far apart.
However, it takes too long for AnyImp and BestImp to run to completion.
A 10-second limit is set for both algorithms in each iteration for finding an independent set improvement.
With this time limit, AnyImp and BestImp run to completion within 10 minutes for almost all intervals.
The optimal solution computed by Exact serves only about 5\% more total riders than that of the solution computed by ImpGreedy; and it is most likely constrained by the number of feasible matches each driver has, which is also limited by the base match reduction Config (30\%, 600, 20).
We explore more about this in Section~\ref{sec-experiment-config}.

We also examined the results from the drivers' perspective; we recorded both the mean occupancy rate and vacancy rate of drivers.
The results are depicted in Table~\ref{table-base-OR-VR} and Figure~\ref{fig-OR-VR}.
\begin{table}[bp]
\footnotesize
\centering
   \begin{tabular}{ l | c | c | c | c }
   	\hline
                                            & ImpGreedy 	 &  LPR    & Exact    & BestImp  \\ \hline
    Average occupancy rate per interval     & 2.703       & 2.417    & 2.789    & 2.753    \\
    Average vacancy rate per interval       & 0.0693      & 0.193    & 0.0289   & 0.0436   \\ \hline
   \end{tabular}
\captionsetup{font=small}
\caption{The average occupancy rate and vacancy rate per interval.}
\label{table-base-OR-VR}
\end{table}
\begin{figure}[!bp]
\centering
\includegraphics[width=\textwidth]{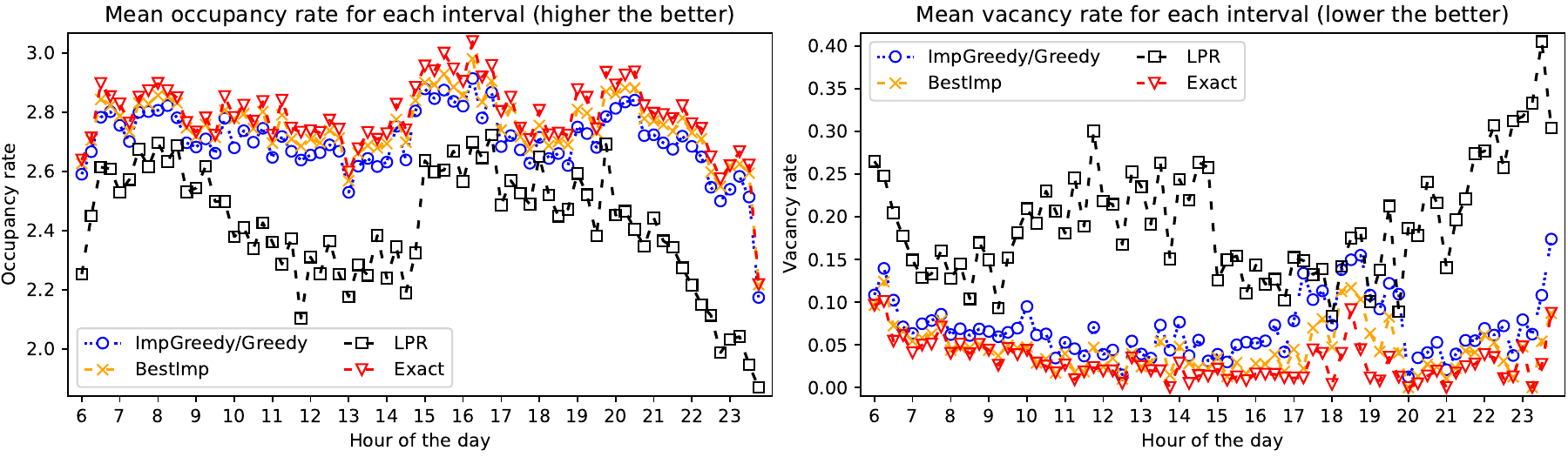}
\captionsetup{font=small}
\caption{The average occupancy rate and vacancy rate of drivers for each interval.}
\label{fig-OR-VR}
\end{figure}%
The mean occupancy rate is calculated as, in each interval, (the number of served passengers + the total number of drivers) divided by the total number of drivers.
The mean vacancy rate describes the number of empty vehicles, so it is calculated as, in each interval, the number of drivers who are not assigned any passenger divided by the total number of drivers.
The average occupancy rate per interval is the sum of mean occupancy rate in each interval divided by the number of intervals (72); and similarly for average vacancy rate per interval.
The occupancy rate results show that in many intervals, 1.7-1.8 passengers are served by each driver on average (except Algorithm LPR).
The vacancy rate results show that in many intervals, only 4-7.5\% and 2-5\% of all the drivers are not assigned any passenger for ImpGreedy and BestImp/Exact, respectively, during all hours except afternoon peak hours.
This is most likely due to the origins of many trips are from the same area (downtown); and if the destinations of drivers and riders do not have the same general direction originated from downtown, the drivers may not be able to serve many riders. On the other hand, when their destinations are aligned, drivers are likely to serve more riders.
The occupancy rate is much lower in the last interval because the number of riders is low, causing the number of served riders low.

\subsubsection{Results on different reduction configurations} \label{sec-experiment-config}
Another major component of the experiment is to measure the performance of the algorithms using different reduction configurations.
We tested 12 different Configs:
\begin{itemize}[leftmargin=*]
\begin{footnotesize}
\item \textit{Small1} (20\%,300,10), \textit{Small2} (20\%,600,10), \textit{Small3} (20\%,300,20), \textit{Small4-10} (20\%,600,20).

\item \textit{Medium1} (30\%,300,10), \textit{Medium2} (30\%,600,10), \textit{Medium3} (30\%,300,20), \textit{Medium4-10} (30\%,600,20).

\item\textit{Large1} (40\%,300,10), \textit{Large2} (40\%,600,10), \textit{Large3-10} (40\%,300,20), and \textit{Large4-10} (40\%,600,20).
\end{footnotesize}
\end{itemize}
Any Config with label ``-10'' at the end means there is a 10-second limit for AnyImp and BestImp to find an independent set improvement (Configs without any label have a 20-second limit).
Note that all 12 Configs have the same sets of driver/rider trips and base matches but have different feasible matches generated at the end (after Algorithm 2).
The performance and running time results of all 12 Configs are depicted in Figures~\ref{fig-configurations-perf}~and~\ref{fig-configurations-time}, respectively.
Since the performance results of ImpGreedy and Greedy are the same, we skip Greedy.
\begin{figure}[b]
\centering
\includegraphics[width=\textwidth]{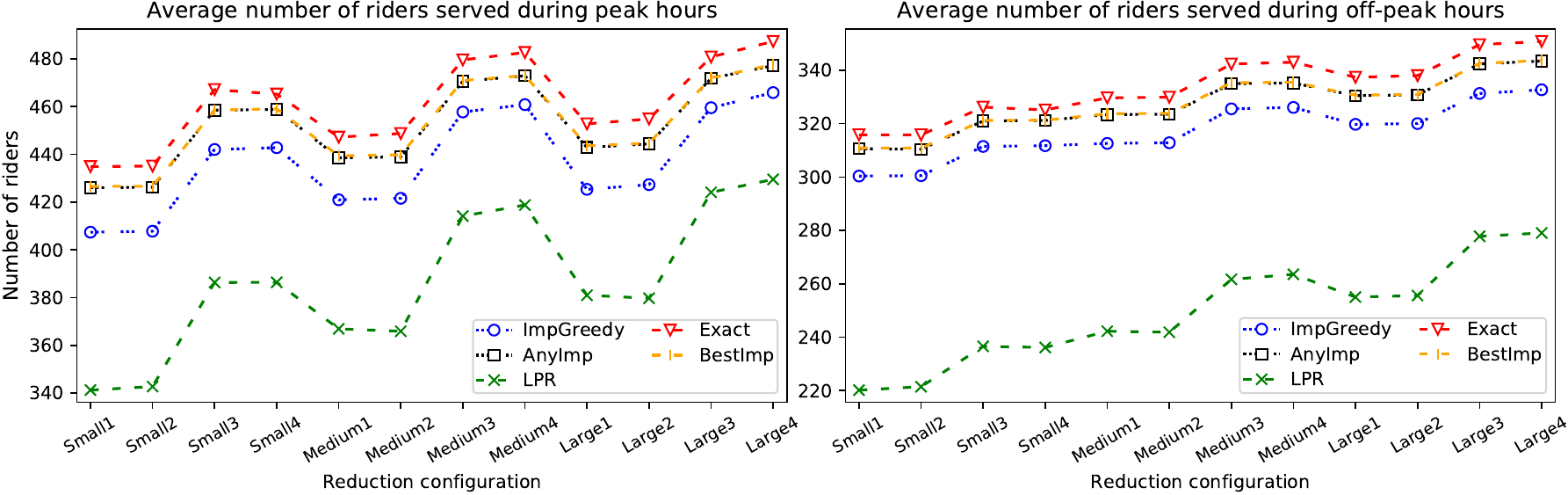}
\captionsetup{font=small}
\caption{Average performance of peak and off-peak hours for different configurations.}
\label{fig-configurations-perf}
\end{figure}
\begin{figure}[t]
\centering
\includegraphics[width=\textwidth]{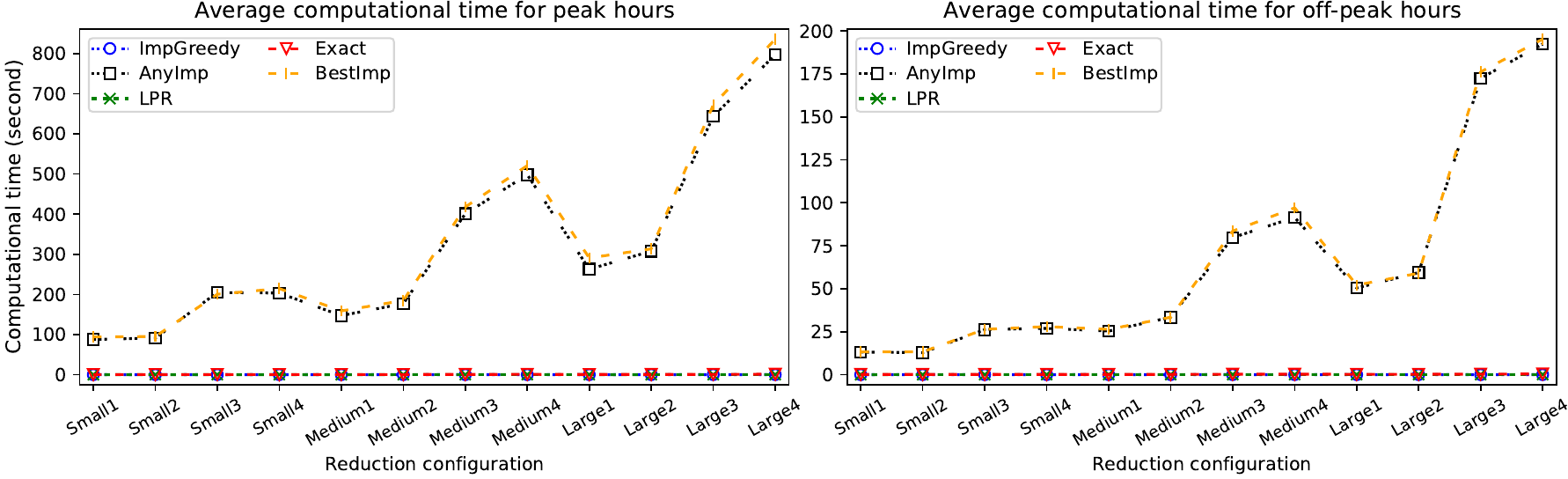}
\captionsetup{font=small}
\caption{Average running time of peak and off-peak hours for different configurations.}
\label{fig-configurations-time}
\end{figure}%

The results are divided into peak and off-peak hours for each Config, averaging all intervals of peak hours and off-peak hours.
As expected, larger Configs give better performance (more passengers are served by drivers).
The increase in performance of Exact, compared to ImpGreedy, remains at about 5\% for each different Config.
This shows that ImpGreedy is practical in terms of performance.
For all algorithms, the increase in performance from Small1 to Small3 is much larger than that from Small1 to Small2 (same for Medium and Large), implying any parameter in a Config should not be too small.
The increase in performance from Large1 to Large4 is higher than that from Medium1 to Medium4 (similarly for Small).
Therefore, a balanced configuration is more important than a configuration emphasizes only one or two parameters.
The average running times of ImpGreedy, LPR and Exact are under a second for all Configs.
On the other hand, for AnyImp and BestImp during peak hours, they require 600-800 seconds and 400-500 seconds for Large3/Large4 and for Medium3/Medium4 Configs, respectively.
By reducing more matches, we are able to improve the running time of AnyImp and BestImp significantly by sacrificing performance slightly.
However, it may still be not practical to use AnyImp and BestImp for peak hours.

We specifically compared the performance of ImpGreedy and LPR since these two are the more practical approximation algorithms.
The performance and running time results of ImpGreedy and LPR using Medium4 and Large4 configs are depicted in Figure~\ref{fig-configurations-perf-time-LPR}.
\begin{figure}[b]
\centering
\includegraphics[width=\textwidth]{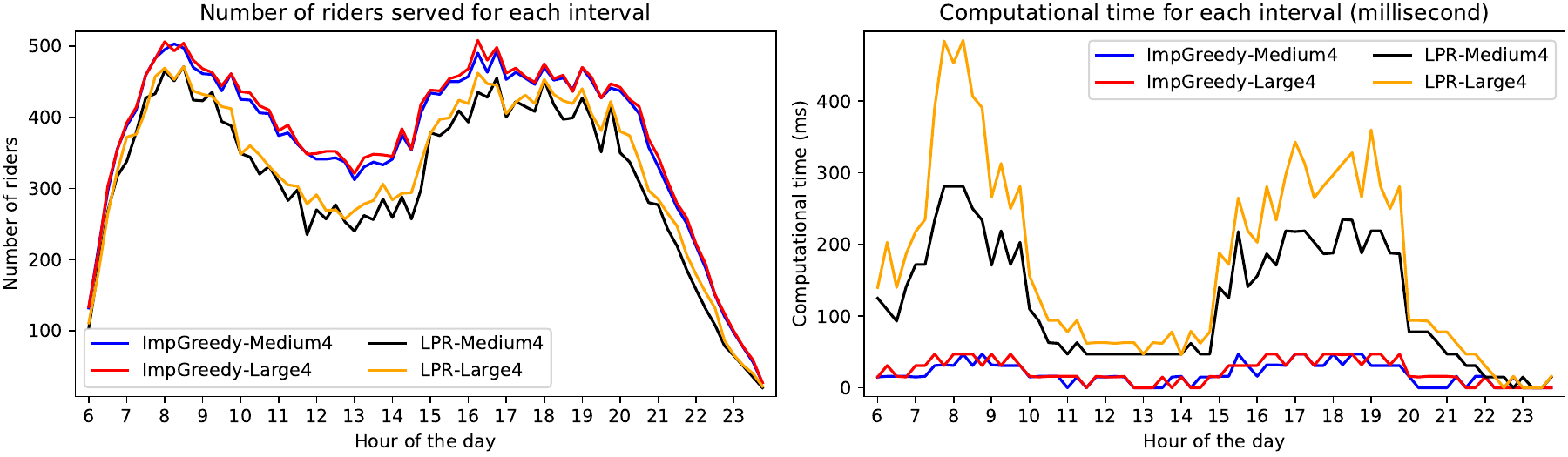}
\captionsetup{font=small}
\caption{Performance of ImpGreedy and LPR for Medium4 and Large4 Configs.}
\label{fig-configurations-perf-time-LPR}
\end{figure}%
For both Configs, ImpGreedy is better than LPR in both performance and running time for each interval.
The difference in performance is most likely due to the removal of riders (step 3 of LPR).
When a rider $j$ is removed from a match $\sigma(i)$, a match $\sigma'(i)$ with $j\notin \sigma'(i)$ and $|\sigma'(i)|> |\sigma(i) \setminus \{j\}|$ for driver $i$ is not searched (even if such a match exists).

We further tested ImpGreedy, Exact and Greedy with the following Configs: \textit{Huge1} (100\%,600,10), \textit{Huge2} (100\%,2500,20) and \textit{Huge3} (100\%,10000,30) (these Configs have the same sets of driver/rider trips and base match sets as those in the previous 12 Configs). The focus of these Configs is to see if these algorithms can handle large number of feasible matches.
The results are shown in Table~\ref{table-hugeConfig}.
\begin{table}[!t]
\scriptsize
\centering
\begin{tabular}{ l | c | c | c }
\hline
	\textbf{ImpGreedy}                                                          & Huge1                  & Huge2                 & Huge3               \\ \hline
	Avg number of riders served for peak/off-peak hours      & 405.8 / 329.2       & 458.7 / 347.5      & 482.5 / 354.2     \\
	Avg time saved of riders for peak/off-peak hours (min)   & 3462.7 / 4500.1   & 3987.6 / 4756.9  & 4237.4 / 4836.9          \\ 
	Avg time saved of riders per interval (min)                     & 4154.3                  & 4500.5                & 4637.0               \\
	Avg running time for peak/off-peak hours (sec)                & 0.0690 / 0.0254       & 0.327 / 0.0824    & 0.806 / 0.170     \\ \hline
	\textbf{Exact}                                                                  & Huge1                   & Huge2                 & Huge3                 \\ \hline
	Avg number of riders served for peak/off-peak hours      & 442.4 / 355.5       & 488.9 / 371.0      & 507.7 / 375.6       \\
	Avg time saved of riders for peak/off-peak hours (min)   & 3701.6 / 4911.7   & 4118.8 / 5128.3  & 4322.9 / 5216.2    \\
	Avg time saved of riders per interval (min)                     & 4508.3                 & 4791.8                 & 4918.4                  \\
	Avg running time for peak/off-peak hours (sec)                & 1.246 / 0.818      & 6.689 / 2.621       & 26.315 / 5.757      \\ \hline
	\textbf{Greedy}                                                                & Huge1                 & Huge2                  & Huge3            \\ \hline
	Avg running time for peak/off-peak hours (sec)                & 10.499 / 2.371    & N/A                     & N/A                \\
	Avg instance size $G(V,E)$ of morning peak ($|E(G)|$)   & 0.014 billion       & 0.25 billion         & 2.4 billion      \\
	Avg time creating $G(V,E)$ of morning peak (sec)           & 10.43                  & 200.41                 & 1391.48         \\ \hline
\end{tabular}
\captionsetup{font=small}
\caption{The results of ImpGreedy and Greedy using Huge Configs.}
\label{table-hugeConfig}
\end{table}

Because ImpGreedy does not create the independent set instance, it runs quicker and uses less memory space than those of Greedy.
Greedy cannot run to completion for Huge2 and Huge3 Configs because in many intervals, the whole graph $G(V,E)$ of the independent set instance is too large to hold in memory (8.00 GB for JVM).
The average numbers of edges in $G(V,E)$ for morning peak hours are 0.014, 0.25 and 2.4 billion for Huge1, Huge2 and Huge3, respectively.
There are techniques in graph processing to solve this problem.
For example, one can use the out-of-core technique, which is to load the needed portion of a graph $G(V,E)$ for processing and unload the processed portion if necessary~\citep[e.g.,][]{Liu-FAST17,Vora-ATC19,Vora-ATC16}.
However, this increases the total running time of the algorithms as the number of I/Os increases.
Another way is to use distributed architecture to process large graphs~\citep{Bouhenni-ACMCS22,Low-VLDBE12}.
This approach may not create a burden in the running time, but it complicates the implementation and maintenance of the system.
More importantly, the time it takes to create $G(V,E)$ can excess practicality in the first place regardless of what technique is used.
The time, displayed in the last row of Table~\ref{table-hugeConfig}, is only the duration for finding all overlapping feasible matches to see if edges of $G(V,E)$ should be created (no actual independent set instance was created for Huge2 and Huge3).

Hence, using Greedy for large instances may not be practical, whereas both ImpGreedy and Exact can handle large instances and can run to completion quickly.
For Huge3, there are 393738 feasible matches on average per interval during peak hours.
ImpGreedy and Exact are able to compute a solution from these many feasible matches in about a second and 26 seconds, respectively.
This shows that both algorithms are scalable when a reasonable Config is used or number of feasible matches is only reasonably large.
Note that the average numbers of served passengers (405.8) and running time (0.06904 seconds) per interval during peak-hours of ImpGreedy with Huge1 is worse than that (452, 0.02475 seconds) produced by ImpGreedy with Medium3.
Similarly, the average numbers of served passengers (442.4) and running time (1.246 seconds) per interval during peak-hours of Exact with Huge1 is worse than that (475.0, 0.6230 seconds) produced by Exact with Medium3.
These support the observation that a balanced configuration is more important than a configuration emphasizes only one or two parameters.

Lastly, we looked at the total (CPU) running times of the algorithms including the time for computing feasible matches (Algorithms 1 and 2).
Table~\ref{table-algorithms-time} shows the average running time of a time interval during peak hours for Algorithm 1 (Alg1), Algorithm 2 (Alg2), and the total time from Algorithm 1 to the finish of each tested algorithm.
\begin{table}[hb]
\footnotesize
\centering
   \begin{tabular}{  l | c || c|c|c|c|c|c }
                   & Alg1  +  Alg2              & \multicolumn{5}{c}{Total time of Alg1 + Alg2 + each algorithm} \\
                    &                                    & ImpGreedy & LPR   & Exact  & Greedy & AnyImp & BestImp   \\ \hline
   {Small3}     & 500.58  + 34.63       & 535.2         & 535.3 & 535.6  & 535.9    & 739.9    & 735.5     \\
   {Small4}      & 500.58  + 35.41      & 536.0         & 536.1 & 536.4  & 536.9    & 739.3    & 750.4     \\
   {Medium3}  & 500.58  + 62.97       & 563.6        & 563.7 & 564.2  & 566.3    & 964.4    & 981.3     \\
   {Medium4}  & 500.58  + 55.32       & 555.9        & 556.1 & 556.7  & 558.8    & 1053.9  & 1076.7   \\
   {Large4}      & 500.58  + 83.26       & 583.9        & 584.1 & 584.9  & 590.3    & 1380.8  & 1419.7   \\
   {Huge2}      & 500.58   + 310.78     & 811.7        & 814.5 & 818.0  & N/A      & N/A       & N/A       \\
   {Huge3}      & 500.58   + 368.93     & 870.3        & 878.1 & 895.8  & N/A      & N/A       & N/A       \\
   \end{tabular}
\captionsetup{font=small}
\caption{Average computational time (in seconds) of an interval during peak hours for all algorithms.}
\label{table-algorithms-time}
\end{table}
The running time of Alg1 solely depends on computing the shortest paths between the trips and stations.
Alg1 runs to completion in about 500 seconds on average per interval during peak hours (7AM-10AM and 5PM-8PM).
As for Algorithm 2, when many trips' origins/destinations are concentrated in one area, the running time increases significantly, especially for drivers with high capacity.
Running time of Alg2 can be reduced significantly by Configs with aggressive reductions.
ImpGreedy and Exact are capable of handling large instances tested. Exact provides better solutions than any of the approximation algorithms. ImpGreedy gives solutions with quality close to other algorithms with running time less than a second for instances tested. ImpGreedy may be more practical for instances larger than those tested.

In conclusion, both ImpGreedy and Exact are much faster and uses less memory space, thus can handle large instances, compared to the other approximation algorithms.
From the experiment results in Figure~\ref{fig-configurations-perf}~and Table~\ref{table-algorithms-time}, it is beneficial to dynamically select different reduction configurations for each interval depending on the number of trips and the number of feasible matches.
When the size of an instance is large and a solution must be computed within some time-limit, ImpGreedy may have a slight advantage over the Exact algorithm.
Recall that the MTR problem (the ILP formulation \eqref{obj-1}-\eqref{constraint-2}) is NP-hard by Theorem~\ref{theorem-nphard} (Theorem~\ref{theorem-ILP}).
From this, if the size of an instance or the number of feasible matches is larger, the running time of Exact for computing an optimal solution is not known and can be time consuming.
As indicated by the results of Huge3, the running time of Exact is 26 times higher than that of ImpGreedy.
A fallback plan would be to run ImpGreedy after Exact.
If after a pre-defined time limit is reached and Exact still cannot compute an optimal solution, the solution computed by ImpGreedy can be used.
As shown by the experiments, the performance of ImpGreedy is still competitive.

\subsubsection{Effects from different acceptance thresholds}
We consider three different acceptance thresholds for riders: 0.9, 0.7 and 0.6 (in addition to 0.8, specified in Table~\ref{table-simulation}, that is already tested in the base instance).
As a reminder, an acceptance threshold (\emph{AT} for short) 0.9 means that the acceptable ridesharing route given to every rider $j$ has travel time (duration) at most 0.9 times $j$'s public transit duration $t(\hat{\pi}_j(\alpha_j))$.
All other parameters in the base instance remain the same.
To see the effect of different acceptance thresholds, two Configs are used: Large4-(40\%, 600, 20) and Huge3-(100\%, 10000, 30).
Only ImpGreedy and Exact were tested.

The overall results are shown in Table~\ref{table-result-ar-large} for Config Large4 and Table~\ref{table-result-ar-huge} for Config Huge3.
\begin{table}[!bp]
\footnotesize
\centering
   \begin{tabular}{ l | r | r | r | r }
   \hline
   \textbf{ImpGreedy}   & \multicolumn{1}{c|}{AT:0.9}   & \multicolumn{1}{c|}{AT:0.8}    & \multicolumn{1}{c|}{AT:0.7}   & \multicolumn{1}{c}{AT:0.6}         \\ \hline
   Total number of riders served                   & 27712     & 27008      & 26099     & 23456       \\    
   Avg number of riders served per interval        & 384.9     & 375.1      & 362.5     & 325.8       \\
   Total time saved of all served riders (minute)	 & 275329.5  & 315851.8   & 344727.3  & 354368.0    \\
   Avg time saved of served riders per interval (minute)	  & 3824.0     & 4386.8   & 4787.9    & 4921.8  \\
   Avg time saved per served rider (minute)		 & 9.94      & 11.69      & 13.21     & 15.11   \\
   Avg time saved per rider (minute)               & 6.07      & 6.97       & 7.61      & 7.82    \\ \hline
   \textbf{Exact}     & \multicolumn{1}{c|}{AT:0.9}     & \multicolumn{1}{c|}{AT:0.8}    & \multicolumn{1}{c|}{AT:0.7}   & \multicolumn{1}{c}{AT:0.6}  \\ \hline
   Total number of riders served                   & 29176     & 28430      & 27561      & 25184          \\    
   Avg number of riders served per interval        & 405.2     & 394.9      & 382.8      & 349.8          \\
   Total time saved of all served riders (minute)	 & 287628.5  & 331979.0   & 364905.9   & 379767.2       \\
   Avg time saved of served riders per interval (minute)	  & 3994.8   & 4610.8   & 5068.1    & 5274.5     \\
   Avg time saved per served rider (minute)		 & 9.86      & 11.68      & 13.24     & 15.08   \\
   Avg time saved per rider (minute)               & 6.35      & 7.33       & 8.05       & 8.38           \\ \hline
   
   \multicolumn{2}{| l |}{Total number of riders and public transit duration} & \multicolumn{3}{l |}{45314 and 1384100.97 minutes}\\ \hline
   \end{tabular}
\captionsetup{font=small}
\caption{Overall solution comparison between different acceptance thresholds for Large3 Config.}
\label{table-result-ar-large}
\end{table}
\begin{table}[!ht]
\footnotesize
\centering
   \begin{tabular}{ l | r | r | r | r}
   \hline
   \textbf{ImpGreedy}    & \multicolumn{1}{c|}{AT:0.9}   & \multicolumn{1}{c|}{AT:0.8}   & \multicolumn{1}{c|}{AT:0.7}   & \multicolumn{1}{c}{AT:0.6}        \\ \hline
   Total number of riders served                  & 29657     & 28579     & 27218      & 24655         \\    
   Avg number of riders served per interval       & 411.9     & 396.9     & 378.0      & 342.4         \\
   Total time saved of all served riders (minute)	& 295310.4  & 333864.3  & 361674.1   & 370964.7      \\
   Avg time saved of served riders per interval (minute)	 & 4101.5   & 4637.0    & 5023.3   & 5152.3   \\
   Avg time saved per served rider (minute)	    & 9.96      & 11.68      & 13.29     & 15.05   \\
   Avg time saved per rider (minute)              & 6.52      & 7.37      & 7.98       & 8.19          \\ \hline
   \textbf{Exact}       & \multicolumn{1}{c|}{AT:0.9}    & \multicolumn{1}{c|}{AT:0.8}   & \multicolumn{1}{c|}{AT:0.7}   & \multicolumn{1}{c}{AT:0.6}  \\ \hline
   Total number of riders served                  & 31168     & 30214     & 29122     & 26973          \\    
   Avg number of riders served per interval       & 432.9     & 419.6     & 404.5     & 374.6          \\
   Total time saved of all served riders (minute)	& 305837.6  & 354127.2  & 386819.1  & 405339.6       \\
   Avg time saved of served riders per interval (minute)	 & 4247.7   & 4918.4    & 5372.5   & 5629.7    \\ 
   Avg time saved per served rider (minute)	    & 9.81      & 11.72      & 13.28     & 15.03   \\
   Avg time saved per rider (minute)              & 6.75      & 7.81      & 8.54      & 8.95           \\ \hline
   \multicolumn{2}{| l |}{Total number of riders and public transit duration} & \multicolumn{3}{l |}{45314 and 1384100.97 minutes}\\ \hline
   \end{tabular}
\captionsetup{font=small}
\caption{Overall solution comparison between different acceptance thresholds for Huge3 Config.}
\label{table-result-ar-huge}
\end{table}
The results for Large4 and Huge3 are consistent for both ImpGreedy and Exact.
As somewhat expected, the total number of riders served decreases for both Configs as the acceptance threshold decreases, since shorter travel duration is required according to riders' requests.
The data in Table~\ref{table-result-ar-large} and Table~\ref{table-result-ar-huge} show that the total time saved of all served riders increases when the acceptance threshold decreases, which implies that the number of riders served is inversely related to the total time saved when the acceptance threshold changes.

Let us focus on the results using Huge3.
From AT 0.9 to 0.8, total number of riders served decreases by 3.635\% (3.061\%), whereas total time saved of all served riders increases by 13.055\% (15.789\%) for ImpGreedy (Exact respectively).
From this, decreasing the acceptance threshold from 0.9 to 0.8 only reduces the number of served riders slightly but significantly reduces the travel time for each served rider.
For a smaller AT, a quicker acceptable route is required for a rider. This may reduces the total number of riders served but each served rider saves more time.
As a result, the average time saved per (served) rider increases as AT decreases. 
Because the optimization goal of the MTR problem is to maximize the number of riders served, the algorithms for MTR find solutions with more riders served instead of focusing on more time saved even if the solutions for a smaller AT are solutions for a greater AT.
From AT 0.8 to 0.7, the gap of inverse relation reduces to: 4.762\% (3.164\%) decreased in total number of riders served and 8.330\% (9.232\%) increased in total time saved of all served riders for ImpGreedy (Exact respectively); and the gap reduces further from AT 0.7 to 0.6.
From the results, it seems that AT between 0.7-0.8 has a nice balance between the number of riders served and time saved of served riders.
Although riders can choose their own acceptance thresholds in practice, the system can suggest a default AT for all riders which would balance between the chance of being served and the amount of time saved.

\begin{figure}[!ht]
\centering
\includegraphics[width=0.88\textwidth]{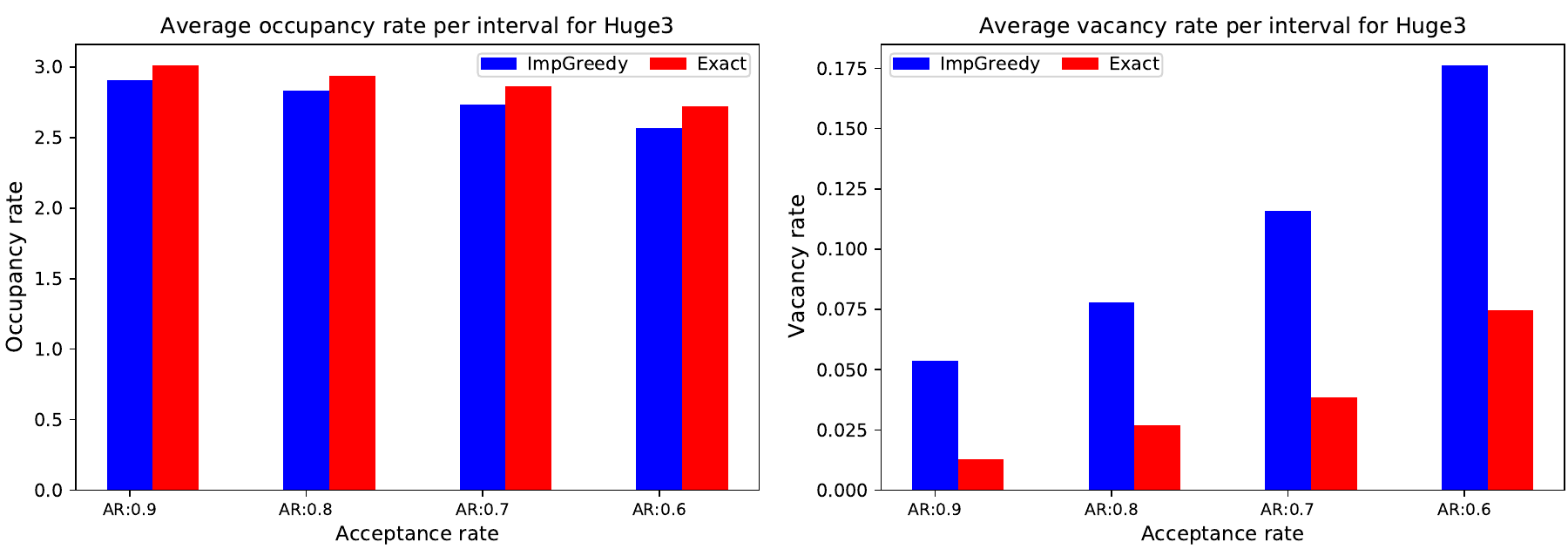}
\captionsetup{font=small}
\caption{The average occupancy rate and vacancy rate per interval for Huge3 Config.}
\label{figure-ar-OR-VR}
\end{figure}
The occupancy rate and vacancy rate of drivers for Huge3 are depicted in Figure~\ref{figure-ar-OR-VR}.
The average occupancy and vacancy rates for both algorithms align with the result shown in Table~\ref{table-result-ar-huge}.
The average occupancy rate for Exact with AT 0.9 actually just exceeds 3 people per vehicle.
As shown, the vacancy rate increases more as the acceptance threshold decreases.
The average vacancy rate for Exact with AT 0.9 is 1.3\%.
If the main goal is to increase occupancy rate and decrease vacancy rate, using a centralized AT of 0.9 is effective.
Bases on this and previous result (Table~\ref{table-result-ar-huge}), an AT of 0.8 seems to be the most balanced.
As a summary of the performance of Exact with AT 0.8 and Config Huge3, 
66.68\% of total riders are assigned ridesharing routes and 25.59\% of total time are saved; and riders are able to reduce their average travel duration from 30.54 minutes to 22.73 minutes.
If we consider only the served riders (30214), the average origin public transit duration per served rider is 30.29 minutes, and the average public transit + ridesharing duration per served rider is 18.57 minutes.

\section{Conclusion and future work} \label{sec-conclusion}
We propose an ILP formulation for the MTR problem, which maximizes the number of public transit riders assigned to drivers subject to shorter commuting time in a transportation system that integrates public transit and ridesharing. We prove that the MTR problem is NP-hard; and we present an exact algorithm (called Exact) and a number of approximation algorithms for the problem. Two of which are very practical: one (called LPR) with $(1-\frac{1}{e})$-approximation ratio using LP relaxation and rounding technique and the other (called ImpGreedy) with $\frac{1}{2}$-approximation ratio  using a greedy approach.
Although Algorithm Exact may run in exponential time in a worst case, experiments show that it is efficient on practical data if the instance is not substantially large.
Algorithm ImpGreedy runs much faster than Exact and has a performance close to Exact.
Despite the theoretical approximation ratio, ImpGreedy outperforms LPR in all instances tested for both performance and running time.
Based on real-world transit datasets in Chicago, our study has shown that integrating public transit and ridesharing can benefit the transportation system as a whole. 
Our base case experiments show that, on average, 61.7\% and 58.7\% of the passengers are assigned ridesharing routes and able to save 23.5\% and 22.4\% of travel time by Exact and ImpGreedy, respectively. 
Majority of the drivers are assigned at least one passenger, and vehicle occupancy rate has improved close to 3 (including the driver) on average.
These results suggest that ridesharing can be a complement to public transit.

The number of passengers assigned to drivers and the time saved by ImpGreedy is about 95\% of those by Exact.
Algorithm ImpGreedy has a polynomial running time in the worst case and runs much faster than Exact for every instance tested.
ImpGreedy can be a fallback plan for Exact when the latter can not give a solution within a time limit in practice.
It is worth improving the performance of ImpGreedy while keeping its time efficiency.
For the instances tested, the time to find feasible matches (Algorithm 1 + Algorithm 2) is much longer than the running time of Exact and ImpGreedy to assign passengers to drivers.
Another future work is to develop faster algorithms for computing feasible matches.

\subsection*{Acknowledgments}
We thank anonymous reviewers for their constructive comments, and we also thank an anonymous reader for suggesting the Generalized Assignment Problem.
This work was partially supported by Canada NSERC Discovery Grants 253500.

\bibliographystyle{elsarticle-harv}
\bibliography{mtr}

\end{document}